\newcommand{\rotleq}{\rotatebox[origin=c]{90}{$\leq$}}
\newcommand{\roteq}{\rotatebox[origin=c]{90}{=}}
\newcommand{\rotleqstar}{\rotatebox[origin=c]{90}{$\stackrel{\star}{\leq}$}}
\def\01{\{0,1\}}
\newcommand{\ket}[1]{|#1\rangle}
\newcommand{\ketbra}[2]{|#1\rangle\langle#2|}
\newcommand{\Tr}{\mbox{\rm Tr}}
\DeclareMathOperator{\poly}{poly}
\newtheorem{theorem}{Theorem}[section]
\newtheorem{definition}[theorem]{Definition}
\newtheorem{fact}[theorem]{Fact}
\newtheorem{lemma}[theorem]{Lemma}
\newtheorem{corollary}[theorem]{Corollary}
\newtheorem{claim}[theorem]{Claim}
\def\01{\{0,1\}}
\newcolumntype{M}[1]{>{\centering\arraybackslash}m{#1}}
\newenvironment{proof}
{\noindent {\bf Proof. }}
{{\hfill $\Box$}\\
	\smallskip}
\renewcommand\thmcontinues[1]{Formal Statement}
\DeclareMathAlphabet{\pazocal}{OMS}{zplm}{m}{n}
\newcommand{\Int}{\ensuremath{\mathsf{Int}}}
\newcommand{\Nequality}{\ensuremath{\mathsf{pDist}}}
\newcommand{\GHD}{\ensuremath{\mathsf{GHD}}}
\newcommand{\GH}{\ensuremath{\mathsf{GH}}}
\newcommand{\SM}{\ensuremath{\mathsf{S}}}
\newcommand{\Mset}{\ensuremath{\mathcal{M}}}
\newcommand{\NM}{\ensuremath{\mathsf{NM}}}
\newcommand{\M}{\ensuremath{\mathsf{M}}}
\newcommand{\QM}{\ensuremath{\mathsf{QM}}}
\newcommand{\QNM}{\ensuremath{\mathsf{QNM}}}
\newcommand{\EQ}{\ensuremath{\mathsf{EQ}}}
\newcommand{\MAJ}{\ensuremath{\mathsf{MAJ}}}
\newcommand{\DISJ}{\ensuremath{\mathsf{DISJ}}}
\def\01{\{0,1\}}
\newcommand{\GBBP}{\mathsf{GBBP}}
\newcommand{\BP}{\mathsf{BP}}
\newcommand{\BBP}{\mathsf{BBP}}
\newcommand{\parity}{\mathsf{PARITY}}
\newcommand{\IP}{\mathsf{IP}}
\begin{document}

\title{Communication memento:\\ Memoryless communication complexity}

\author{
Srinivasan Arunachalam\\ 
\small IBM Research, USA\\
\small \texttt{Srinivasan.Arunachalam@ibm.com}
\and
Supartha Podder\\
\small University of Ottawa, Canada\\
\small \texttt{spodder@uottawa.ca}
}

\date{}

\maketitle

\begin{abstract}
We study the communication complexity of computing functions $F~:~\01^n\times~\01^n\rightarrow \01$ in the \emph{memoryless communication model}.  Here, Alice is given $x\in \01^n$, Bob is given $y\in \01^n$ and their goal is to compute $F(x,y)$ subject to the following constraint: at every round, Alice receives a message from Bob and her reply to Bob solely depends on the message received and her input $x$ (in particular, her reply is independent of the information from the previous rounds); the same applies to Bob.  
 The cost of computing~$F$ in this model is the \emph{maximum} number of bits exchanged in any round between Alice and Bob (on the worst case input $x,y$). 
 In this paper, we also consider variants of our memoryless model wherein one party is allowed to have memory, 
 the parties are allowed to communicate quantum bits, 
 only one player is allowed to send messages. 
We show that some of these different variants of our memoryless communication model capture the garden-hose model of computation by Buhrman et al.~(ITCS'13), space bounded communication complexity by Brody et al.~(ITCS'13) and the overlay communication complexity by Papakonstantinou et al.~(CCC'14). Thus the memoryless communication complexity model provides a unified framework to study all these space-bounded communication complexity models.

 We establish the following main results: (1) We show that the memoryless communication complexity of $F$ equals the logarithm of the size of the smallest bipartite branching program computing $F$ (up to a factor $2$); (2) We show that memoryless communication complexity equals garden-hose model of computation; 
 (3) We exhibit various exponential separations between these memoryless communication models.

We end with an intriguing open question: can we find an explicit function $F$ and universal constant $c>1$ for which the memoryless communication complexity is at least $c \log n$? Note that 
$c\geq 2+\varepsilon$ would imply a $\Omega(n^{2+\varepsilon})$ lower bound for general formula size, improving upon the best lower bound by Ne\v{c}iporuk~\cite{nechiporuk:cc}.
\end{abstract}

\section{Introduction}
Yao~\cite{yao:comm} introduced the model of \emph{communication complexity} in 1979 and ever since it's introduction, communication complexity has played a pivotal role in understanding various problems in theoretical computer science. In its most general form in this model, the goal is the following: there are two separated parties usually referred to as Alice and Bob, Alice is given an $n$-bit string $x\in \01^n$ and similarly Bob is given $y\in \01^n$ and together they want to compute $F(x,y)$ where $F:\01^n\times \01^n\rightarrow \01$ is a function known to both of them. Here Alice and Bob are given unlimited computational time and memory and the \emph{cost} of any communication protocol between Alice and Bob is the \emph{total number of bits exchanged} between them. Clearly a trivial protocol is Alice sends her input $x$ to Bob who can then compute $F(x,y)$, which takes $n$ bits of communication. Naturally, the goal in communication complexity is to minimize the number of bits of communication between them before computing $F(x,y)$.  The \emph{deterministic communication complexity} of a function~$F$ (denoted $\textsf{D}(F)$) is defined as the total number of bits of communication before they can decide $F(x,y)$ on the worst-case inputs~$x,y$. 

Since its introduction there have been various works that have extended the standard {deterministic} communication model to the setting where Alice and Bob are allowed to share randomness and need to output $F(x,y)$ with high probability (probability taken over the randomness in the protocol). Apart from this there has been studies on non-deterministic communication complexity~\cite{de2003nondeterministic}, quantum communication complexity~\cite{yao1993quantum} (wherein Alice and Bob are allowed to share \emph{quantum bits} and possibly have shared entanglement), \emph{unbounded error communication complexity}~\cite{paturi1986probabilistic} and their variants. One-way variants have also been considered where only Alice sends messages to Bob.
Study of these different models of communication complexity and their variants have provided many important results in the fields of VLSI~\cite{Pal:ccandparral}, circuit lower bounds~\cite{goldmanhastad:cc}, algorithms~\cite{AMS:sketching}, data structures~\cite{ccfordata}, property testing~\cite{blaisbrodymatulef:property}, streaming algorithms~\cite{bar2004information}, computational complexity~\cite{ccforcomputationalcomplexity}, extended formulations~\cite{fiorini:extended}.\footnote{For more on communication complexity and its applications, we refer the interested reader to the standard textbooks for communication complexity~\cite{ccbook,lee:ccbook2}.}

\subsection{Background}
\paragraph{Space bounded communication complexity.} In the context of our current understanding of computation, the study of space required to solve a problem is 
a central topic in complexity theory. Several space bounded models such as width-bounded branching programs~\cite{lam1989tradeoffs}, limited depth circuits, straight line protocols~\cite{lam1992trade} have been widely studied in this context. In this direction variants of communication complexity have also been analyzed to better understand communication-space trade-offs~\cite{impagliazzo2010communication, klauck2004quantum, lam1989tradeoffs}. In particular, the relation between space-bounded computation and communication complexity was formally initiated by Brody et al.~\cite{brody2013space} who considered the following question: what happens if we change the standard communication model such that, in each step of communication, Alice and Bob are limited in their ability to store the information from the previous rounds (which includes their private memory and messages exchanged). 
In this direction, they introduced a new model wherein Alice and Bob each are allowed to store at most $s(n)$ bits of memory and showed that unlike the standard communication complexity, in this model \emph{super-linear} lower bounds on the amount of communication is possible.\footnote{We remark that the separations obtained by~\cite{brody2013space} were for non-Boolean functions.} 
Brody et al.\ mainly studied one-way communication complexity variant of this limited memory model in which Bob can have two types of memory: an oblivious memory (depends only on Alice's message) and a non-oblivious memory  (for computation). With these definitions, they obtained memory hierarchy theorems for such communication models analogous to the space hierarchy theorem in the Turing machine world.  

\paragraph{Overlay communication complexity.} Subsequently, Papakonstantinou, et al.~\cite{memoryless:comm}  defined a similar space-bounded \emph{one-way} communication model wherein Alice has unlimited memory and Bob has either no memory or constant-sized memory. At each round, messages from Alice to Bob consists of at most $t(n)$ bits and the complexity of computing $F$ is the maximum $t(n)$ required over all inputs to $F$. They characterized the complexity in this model by an elegant combinatorial object called the \emph{rectangle overlay} (which is defined in Section~\ref{subsec:overlay}). They also managed to establish connections between their model and the well-known communication complexity polynomial hierarchy, introduced by Babai, Frankl and Simon \cite{babai1986complexity}. Papakonstantinou et al.~\cite{memoryless:comm} showed that the message length in their model corresponds to the oblivious memory in a variant of space bounded model, introduced by Brody et al.~\cite{brody2013space}, where Bob only has access to an oblivious~memory.

\paragraph{Garden-hose model.} Another seemingly unrelated complexity model, the garden-hose complexity was introduced by Buhrman et al.~\cite{gardenhose:intro}  to understand quantum attacks on position-based cryptographic schemes (see Section~\ref{sec:garden} for a formal definition). Polynomial size garden-hose complexity is known to be equivalent to Turing machine $\log$-space computations with pre-processing. In the garden-hose model two distributed players Alice and Bob use several pipes to send water back and forth and compute Boolean functions based on whose side the water spills. Garden-hose model was shown to have many connections to well-established complexity models like formulas, branching programs and circuits. 
A long-standing open question in this area is, is there an explicit function on $n$ bits whose garden-hose complexity is super-linear in $n$?

\paragraph{Branching programs.}  Another unrelated computation model is branching program. Understanding the size of De Morgan formulas that compute Boolean functions has a long history. In particular, there has been tremendous research in understanding lower bounds on size of De Morgan formulas computing a Boolean function.
Similar to formulas, branching programs have also been well-studied in complexity theory. For both branching programs and formulas, we have explicit functions which achieve quadratic (in input size) lower bounds on the size of the branching program/formula computing them. 
A few years ago, Tal~\cite{tal:quadraticbipartite} considered \emph{bipartite formulas} for $F:X\times Y\rightarrow \01$ (where each internal node computes an arbitrary function on either $X$ or $Y$, but not both) and showed that the inner product function requires quadratic-sized formulas to compute. In the same spirit as Tal's result, a natural open question is, is there an explicit bipartite function which requires super-linear sized bipartite branching programs to compute?

Given these different models of computation, all exploring the effects on computation under various restrictions, a natural question is, can we view all of them in a unified~way:

\begin{question*}
\label{open:main-question}
Is there a model of communication that captures all the above computational models?
\end{question*}

\textbf{In this work} we introduce a very simple and new framework called \emph{the memoryless communication complexity} which captures all the computational models mentioned above.

\subsection{Memoryless Communication Models}

\textbf{Memoryless communication models.}
We introduce a \emph{natural} model of communication complexity which we call \emph{the memoryless communication complexity}. Here, like the standard communication complexity, there are two parties Alice and Bob given $x,y$ respectively and they need to compute $F(x,y)$, where $F:\01^n\times \01^n\rightarrow \01$ is known to both of them. However, we tweak the standard communication model in the following two ways: The first change is that Alice is ``\emph{memoryless}'', i.e.,  at every round Alice computes the next message to send solely based on only her input~$x$ and the message received from Bob in this round.
She does not remember the entire transcript of messages that were communicated in the previous rounds and also forgets all the private computation she did in the previous rounds.  
Similarly Bob computes a message which he sends to Alice, based \emph{only} on his input $y$ and the message received from Alice in the current round. After Bob sends his message, he also forgets the message received and all his private computations. Alice and Bob repeat this procedure for a certain number of rounds before one of them outputs~$F(x,y)$. 

The second crucial change in the memoryless communication model is that the \emph{cost} of computing~$F$ in this model is the \emph{size of the largest message} communicated between Alice and Bob in any round of the protocol (here size refers to the number of bits in the message). Intuitively, we are interested in knowing what is the size of a re-writable message register (passed back and forth between Alice and Bob) sufficient to compute a function~$F$ on all inputs $x$ and~$y$, wherein Alice and Bob do not have any additional memory to remember information between rounds. We denote the \emph{memoryless communication cost of computing $F$} as $\NM(F)$ (where $\NM$ stands for ``no-memory''). We believe this communication model is very natural and as far as we are aware this memoryless communication model wasn't defined and studied before in the classical literature. 

Being more formal, we say $F:\01^n\times \01^n\rightarrow \01$ can be computed in the memoryless communication model with complexity $t$, if the following is true. For every $x,y\in \01^n$ there exists functions $\{f_x,g_y:\01^t\rightarrow \01^t\}$ such that, on input $x,y$, Alice and Bob use $f_x$ and $g_y$ respectively to run the following protocol: the first message in the protocol is $f_x(0^t)$ from Alice to Bob and thereafter, for every message $m_B$ Bob receives, he replies with deterministic $m'=g_y(m_B)$ and similarly for every message $m_A$ Alice receives shes replies with $m''=f_x(m_A)$. The protocol terminates when the transcript is $(1^{t-1}b)$ at which point they output $b$ as their guess for $F(x,y)$; and we say the protocol computes $F$ if for every $x,y$, the output $b$ equals $F(x,y)$.  $\NM(F)$ is defined as the smallest $t$ that suffices to compute $F$ (using the protocol above) for every $x,y\in \01^n$.

It is worth noting that in the memoryless communication protocol, Alice and Bob do not even have access to clocks and hence cannot tell in which round they are in (without looking at the message register). Hence, every memoryless protocol can be  viewed  as Alice and Bob applying \emph{deterministic} functions (depending on their inputs) which map incoming messages to out-going messages. Also note that unlike the standard communication complexity, where a single bit-message register suffices for computing all functions (since everyone has memory), in the $\NM$ model because of the memoryless-ness we need more than a single bit register for computing almost all functions.

For better understanding, let us look at a protocol for the standard \emph{equality} function defined as $\EQ_n:\01^n\times \01^n\rightarrow \01$ where $\EQ_n(x,y)=1$ if and only if $x=y$. It is well-known that $\textsf{D}(\EQ_n)=n$. In our model, we show that $\NM(\EQ_n)\leq \log n+1$: for $i=1,\ldots,n$, at the $i$th round, Alice sends the $(\log n+1)$-bit message $(i,x_i)$ and Bob returns $(i,[x_i=y_i])$,\footnote{Here $[\cdot]$ is the indicator of an event in the parenthesis.} Alice increments $i$ and repeats this protocol for $n$ rounds. In case Bob finds an $i$ for which $x_i\neq y_i$, he outputs $0$, if not after $n$ rounds they output $1$. Note that this protocol didn't require Alice and Bob to have any memory and the length of the longest message in this protocol was $\log n+1$.  We discuss more protocols later in the paper and formally describe the memoryless communication model in~Section~\ref{sec:memcommmode}. 

\paragraph{Variants of the memoryless model.} Apart from the memoryless communication complexity, we will also look at the ``{memory-nomemory communication complexity}'' wherein Alice is allowed to have memory (i.e., Alice can know which round she is in, can remember the entire transcript and her private computations of each round) whereas Bob doesn't have any memory during the protocol. The goal of the players remains to compute a function $F$ and the \emph{cost} of these protocols (denoted by $\M(F)$) is still defined as the smallest size of a message register required between them on the worst inputs. Apart from this, we will also consider the quantum analogous of these two communication models wherein the only difference is that Alice and Bob are allowed to send \emph{quantum} bits. We formally describe these models of communication in Section~\ref{sec:memcommmode}. In order to aid the reader we first set up some notation which we use to describe our results: for $F:\01^n\times \01^n\rightarrow \01$, let
\begin{enumerate}
    \item $\NM(F)$ be the memoryless communication complexity of computing $F$ wherein Alice and Bob both do not have any memory.
    \item $\M(F)$ be the memory-nomemory communication complexity of computing $F$ where Alice has memory and Bob doesn't have memory
\end{enumerate}
Apart from these, we will also allow quantum bits of communication between Alice and Bob and the complexities in these models are denoted by $\QNM(F)$ and $\QM(F)$. Additionally, we will consider the one-way communication variants wherein only Alice can send messages to Bob and the complexities in these models are denoted by $\NM^{\rightarrow}(F), \M^{\rightarrow}(F), \QNM^{\rightarrow}(F), \QM^{\rightarrow}(F)$.

\subsection{Our Contributions}

 The main contribution in this paper is to first define the model of the memoryless communication complexity and consider various variants of this model (only some of which were looked at before in the literature). We emphasize that we view our main contribution as a new \emph{simple} communication model that provides a conceptual – rather than technical – contribution to studying space complexity, bipartite branching programs and garden-hose complexity under a single model. Given the vast amount of research in the field of communication complexity, we believe that our memoryless model is very natural model of computation. We now state of various connections between our memoryless communication model and other computational models.

\paragraph{1. Characterization in terms of branching programs.}
It is well-known that standard models of communication complexity are characterized by the so-called (monochromatic) ``rectangles'' that partition the communication matrix of the function Alice and Bob are trying to compute. In the study of memoryless model, Papakonstantinou et al.~\cite{memoryless:comm} specifically consider the memory-nomemory model of communication complexity wherein Alice has a memory and Bob doesn't and they are restricted to one-way communication from Alice to Bob. They show a beautiful combinatorial rectangle-overlay characterization (denoted $\textsf{RO}(F)$) of the  memory-no memory communication model.\footnote{This rectangle-overlay complexity is formally defined in Section~\ref{subsec:overlay}.}
 One natural idea is to improve the $\textsf{RO}(F)$ complexity to a more fine-grained rectangle measure that could potentially also characterize $\NM(F)$, but this doesn't seem to be true. The fact that both Alice and Bob do not have memory, doesn't allow them to ``narrow'' down into a rectangle allowing them to compute the function, instead they narrow down into a set of rectangles. This motivates the question, is there a natural characterization of even the memoryless communication model, in which both Alice and Bob do not have memory? Here we answer this in the positive. We provide a characterization of memoryless communication complexity using branching programs. In particular, we show that for every $F:\01^n\times \01^n\rightarrow \01$, the memoryless complexity $\NM(F)$ is (up to a factor $2$) equal to the logarithm of the size of the smallest bipartite branching program computing $F$.\footnote{We defer the formal definition of such branching programs to Section~\ref{sec:prelim} and Section~\ref{subsec:overlay}.} We give a proof of this statement in Theorem~\ref{thm:d_nomem_logT}.

\paragraph{2. Characterization in terms of garden hose complexity} 
	The garden-hose model of computation was introduced by Buhrman et al.~\cite{gardenhose:intro} to understand quantum attacks on position-based cryptographic schemes. It is a playful communication model where two players compute a function with set of pipes, hoses and water going back-and-forth through them. 
	Alice and Bob start with $s$ pipes and based on their private inputs ``match'' some of the openings of the pipes on their respective sides. Alice also connects a water tap to one of the open pipes. Then based on which side the water spills they decide on the function value. Naturally they want to minimize the number of pipes required over all possible inputs and the garden-hose complexity $\GH(F)$ is defined to be the minimum \emph{number of pipes} required to compute $F$ this way.  
	 Given it's puzzle-like structure, there have been several works to understand this model and various connections between the garden-hose model and other branches of theoretical computer science were established~\cite{gardenhose:intro,podder:garden,speelman:thesis,buhrman2016quantum,chiu2014garden,speelman:thesis2,dulek2016quantum}.

On the other hand, space bounded communication complexity was introduced by Brody et al.~\cite{brody2013space} to study the effects on communication complexity when they players are limited in their ability to store information from previous rounds. Here Alice and Bob each have at most $s(n)$ bits of memory. Based on their private inputs $x,y$ they want to compute the function in a manner in which at each round Alice receives a single bit message from Bob and based on her input $x$, the incoming message $m_B$ and her previous $s(n)$-bit register content, she computes a new~$s(n)$-bit register and decides whether to stop and output $0/1$ or to continue. Bob does the same.  
Space bounded communication complexity $\SM(F)$ of computing a function~$F$ is the minimum register size~$s(n)$ required to compute the function on the hardest input.    
        
It was already shown by \cite{brody2013space} that for every function, the logarithm of the garden-hose complexity and the space bounded communication complexity is factor 2 related. It is also easy to show that our newly defined memoryless communication complexity is also factor 2 related to the space bounded communication complexity by \cite{brody2013space}. 

\[\NM(F) \leq 2\cdot \SM(F)+1, \text{ and } \hspace{2mm} \SM(F) \leq \NM(F) + \log \NM(F)\]
 We give a proof of this statement in Lemma~\ref{label:lemma-nm-sm}. Thus it immediately follows that the logarithm of the garden-hose complexity and the memoryless communication complexity of any function is also at most factor 3 related. 
 However we improve this relation using an elegant trick of~\cite{lange2000reversible} that allows one to make computations reversible; and thereby show that for every function $F$, $\NM(F)$ and $\GH(F)$ are equal up to an additive term 4. 

$$
\log\GH(F) -4 \leq \NM(F)\leq \log \GH(F).
$$
We give a proof of this in Theorem~\ref{lem:nmghm}. Hence, the memoryless communication complexity model provides a clean framework for studying all these apparently different looking computational~models. 

As an immediate application of this new characterization of the garden-hose model, we get a better upper bound for the garden-hose complexity of the indirect storage access function. 

\begin{definition}[Indirect Storage Access]
Let $n\geq 4$ and $m\geq 2$ be such that $n=2m+\log\big(m/(\log m)\big)$. The Indirect storage access function $\mathsf{ISA}_n : \{0,1\}^n \rightarrow \{0,1\}$ is defined on the input string $( \vec{x}, \vec{y}_1, \ldots, \vec{y}_{2^k}, \vec{z})$ where $k= \log \Big(\frac{m}{\log m}\Big)$, $\vec{x}\in \01^m$, $\vec{y}\in \01^{\log m}$, $\vec{z}\in \01^k$.
Then $\mathsf{ISA}_n( \vec{x}, \vec{y}_1, \cdots, \vec{y}_{2^k}, \vec{z})$ is evaluated as follows: compute $a=\Int(z)\in [2^k]$, then compute $b=\Int(\vec{y}_{a})\in [m]$ and output~$x_b$.

For the communication complexity version Alice gets $\vec{x},\vec{z}$ and Bob gets  $\vec{y}_1,\ldots,\vec{y}_{2^k}$, they want to compute~$\mathsf{ISA}_n(\vec{x},\vec{y}_1,\ldots,\vec{y}_{2^k},\vec{z})$.
\end{definition}

It was conjectured in \cite{podder:garden} that the Indirect Storage Access function has garden hose complexity $\Omega(n^2)$. This function is known to have $\Omega(n^2)$ lower bound for the branching program \cite{wegener1987complexity} and thus is believed to be hard for garden-hose model in general.\footnote{In a typical garden-hose protocol for computing $\mathsf{ISA}_n$, Alice uses $m$ pipes to describe $\vec{z}$ to Bob (each pipe for a different value of $\vec{z}$). Bob can then use another set of $m$ pipes to send $\vec{y}_{\Int(z)}$ to Alice. But since $\vec{y}_i$s need to be unique it appears that Bob needs $m$ set of such $m$ pipes in the worst case. This many-to-one mapping seems hard to tackle in the garden-hose model in general. Hence $\mathsf{ISA}_n$ appears to have an $\Omega(n^2)$ garden-hose complexity.} 
But it is easy to see that $\NM(\mathsf{ISA}) \leq \log n$: Alice sends $\vec{z}$ to Bob who then replies with $\vec{y}_{\Int(z)}$. Finally Alice computes the output. 
Thus using the memoryless-garden-hose equivalence (in Theorem~\ref{lem:nmghm}) we immediately get $\GH(\mathsf{ISA}) \leq 16n$ (thereby refuting the conjecture of~\cite{podder:garden}).

\paragraph{4. Separating these models.} We then establish the following inequalities relating the various models of communication complexity.\footnote{Some of the inequalities are straightforward but we explicitly state it for completeness.}
 
		\[
\begin{array}[t]{@{}c@{}}
      {\M(F)} \\
      \rotleq \\
      \QM(F)
  \end{array}
  \leq \begin{array}[t]{@{}c@{}}
      {\NM(F)} \\
      \rotleqstar \\
      \QNM(F)
  \end{array} 
  = \begin{array}[t]{@{}c@{}}
      {\log \GH(F)} \\
      \roteq \\
      2\cdot \textsf{S}(F)
  \end{array}   
    \leq \begin{array}[t]{@{}c@{}}
      {\M^{\rightarrow}(F)} \\
      \rotleq \\
      \QM^{\rightarrow}(F)
  \end{array} 
    \leq \begin{array}[t]{@{}c@{}}
      {\NM^{\rightarrow}(F)} \\
      \rotleq \\
      \QNM^{\rightarrow}(F)
  \end{array} 
\]
Furthermore, except the inequality marked by $(\star)$, we show the existence of various functions $F:\01^n\times \01^n\rightarrow \01$ for which every inequality is exponentially weak. In order to prove these exponential separations we use various variants of well-known functions such as inner product, disjointness, Boolean hidden matching problem, gap-hamming distance~problem. Giving exponential separations between quantum and classical communication complexity\footnote{These exponential separations are in the standard communication model where the communication complexity is the \emph{total} number of bits or qubits exchanged between Alice and Bob.} is an extensively studied subject~\cite{buhrman1998quantum,swaptest,gavinsky:exponential,yossef:exponential,gavinsky2019bare} and in this paper we show such separations can also be obtained in the memoryless models. We provide the proof in Theorems~\ref{thm:exponentiallyweakinequality} and~\ref{thm:qcseparations}.

In this paper, we haven't been able to give a large separation between $\QNM$ and $\NM$, primarily because all lower bound techniques we have for $\NM$ seem to apply for $\QNM$ as well such a deterministic one-way communication complexity and non-deterministic communication complexity.  The only ``trivial" separation we can give is a factor-$2$ gap between $\QNM$ and $\NM$ using the standard idea of super-dense coding (which in fact applies to \emph{all} classical memoryless protocols). Observe that by our our garden-hose characterization earlier, this factor-$2$ separation translates to quadratic separation between ``quantum-garden hose" model and the classical garden hose model.

Since $\NM(F)$ is at most  $\M^{\rightarrow}(F)$ for any $F$, memory-no memory communication complexity can be use to design garden-hose protocols. 
Using this method we obtain a sub-quadratic garden-hose protocol for computing the function \emph{Disjointness with quadratic universe} which was conjecture to have a quadratic complexity in~\cite{podder:garden}. We discuss this protocol in Section~\ref{sec:garden}.

\paragraph{5. Towards obtaining better formula bounds.} Finally, it was shown by Klauck and Podder~\cite{podder:garden} that any formulae of size $s$ consisting of \emph{arbitrary} fan-in 2 gates (i.e., formulae over the binary basis of fan-in 2 gates) can be simulated by a garden-hose protocol of size $s^{1+\varepsilon}$ for any arbitrary $\varepsilon > 0$. In this work, we show that an arbitrary garden-hose protocol can be simulated by a memoryless protocol without \emph{any} additional loss, i.e., a size $s$ garden-hose protocol can be turned into a memoryless protocol of size $\log s$. 
In particular, putting together these two connections, it implies that a size $s$ formula can be turned into a memoryless protocol of size $(1+\varepsilon)\log s$. 
Thus our result provides a new way of proving formulae size lower bound for arbitrary function $F$ by analyzing the memoryless protocol of $F$.\footnote{Here, the inputs $x,y$ are distributed among two players and their goal is to compute $(F\circ g)(x,y)$ where $g$ is a constant-sized gadget.} 
The best known lower bound for formulae size (over the basis of all fan-in 2 gate) is $\Omega(n^2/\log n)$, due to  Ne\v{c}iporuk from 1966~\cite{nechiporuk:cc}.
Analogous to the Karchmer-Wigderson games~\cite{karchmer:game} and Goldman and H\aa{}stad~\cite{goldmanhastad:cc} techniques which uses communication complexity framework to prove circuit lower bounds our new communication complexity framework is a new tool for proving formulae size lower bounds.

Brody et al.~\cite{brody2013space} conjectured that the problem of reachability in a graph requires $\Omega(\log^2 n)$ non-oblivious memory.
However as we have mentioned earlier the space bounded communication complexity and the memoryless communication complexity of any function are equal up to a constant factor. Thus proving this conjecture would imply the same lower bound on the memoryless communication complexity and in turn  imply an $n^{\log n}$ formula-size lower bound for reachability, which would be a break-through in complexity theory. In fact, because of the same general formula to memoryless communication simulation, showing even a $(2+\varepsilon) \log n$ lower bound for reachability would be very interesting.

Finally an additional benefit to our characterization is the following: information theory has been used extensively to understand communication complexity~\cite{bar2002information,bar2004information,ganor2015exponential,braverman2013information,braverman2015interactive} (just to cite a few references). As far as we are aware, using information theoretic techniques haven't been explored when understanding the models of computation such as formula size, branching programs and garden hose model. We believe our characterization using memoryless communication model might be an avenue to use information-theoretic ideas to prove stronger lower bounds in these areas.\footnote{In some sense, the message size in a protocol in the fully memoryless setting dictates how much information the players must store in order to pause and get back to the computation later. Note that as the players do not have any memory all such information has to be kept in the message register. Quantifying this amount of information for any function would yield a lower bound on the memoryless communication complexity of that function. In particular, note that in an $\NM$ protocol, suppose Alice and Bob exchanges the messages $\{\textbf{M}_i\}$, then for $i\neq j$, we have $\textsf{I}(\textbf{M}_i:\textbf{M}_j)=\textsf{H}(\textbf{M}_i)+\textsf{H}(\textbf{M}_j)$ (where $\textsf{I},\textsf{H}$ are mutual information and entropy of random variables respectively) since $M_i,M_{i+1}$ are related by a \emph{deterministic} function $f_x,g_y$ that are fixed once Alice and Bob obtain their inputs $x,y$, so $\textsf{H}(\textbf{M}_i| \textbf{M}_j)=0$.}

\subsection{Other related works.}
Finally here we discuss some more related works.  
Impagliazzo and Williams~\cite{impagliazzo2010communication} considered two variants of the standard communication complexity model where players have access to a synchronized clock. They then studied their relationship with the standard communication complexity and the polynomial hierarchy.
In the quantum setting, Ablayev et al.~\cite{ambainis:spacebounded} consider the memoryless communication model we consider and their focus was on discussing its connections to proving lower bounds on automata and ordered binary decision diagrams and streaming algorithms.   Chailloux et al.~\cite{chailloux2017information} study the quantum memoryless communication complexity wherein Alice and Bob do not have private memory but are allowed to adaptively apply unitaries to the quantum states they exchange in each round. They study
the information cost of memoryless quantum protocols and prove a tight lower bound on the information cost of the AND function for $k$-round quantum memoryless protocols.
Buhrman et al.~\cite{buhrman2016quantum} studied quantum memoryless protocols and established a connection between memoryless protocols and Bell inequality violations. 
Jeffery~\cite{jeffery:span} recently related the space complexity of quantum query algorithms with approximate span programs.  

\subsection{Open questions and future directions}
\label{sec:openfuture}
In this work, our main contribution has been to describe a seemingly-simple model of communication complexity  and characterize it's complexity using branching programs. We believe that our work could open up a new direction of research and results in this direction. Towards this, here we mention a few open questions:
\begin{enumerate}
    \item Is there a function $F:\01^n\times \01^n\rightarrow \01$ and a universal constant $c>1$ for which we have $\NM(F)\geq c \log n$. In particular, there are two consequences of such a result: (a) Using our relation to garden-hose model in Section~\ref{sec:garden}, such a function will lead to the first \emph{super-linear} $n^c$ lower bound for garden-hose complexity, (b) using our characterized to branching programs, this would result in the first super-linear $n^c$ lower bound for \emph{bipartite} branching programs (analogous to Tal's first super-linear lower bound on bipartite formula size of inner-product~\cite{tal:quadraticbipartite}). Also if we could show this for $c\geq 2+\varepsilon$, this would imply a $\Omega(n^{2+\varepsilon})$ lower bound for general formula size, improving upon the best lower bound by Ne\v{c}iporuk~\cite{nechiporuk:cc}.
   \item One possible candidate function which we haven't been to rule out is the distributed $3$-clique function: suppose Alice is given $x\in \01^{\binom{n}{2}}$ and Bob is given $y\in \01^{\binom{n}{2}}$. We view their inputs as jointly labelling of the $\binom{n}{2}$ edges of a graph on $n$ vertices, then does the graph with edges labelled by $x\oplus y$ have a triangle? Also, what is the complexity of the $k$-clique problem?
   \item Using information theory for proving better lower bounds? Our $\NM$ protocol by construction has several structural properties making it amenable to information-theoretic analysis. Could one use this to prove a super-linear garden-hose or bipartite branching program lower bound?
   
   \item Can we prove a $(2+\varepsilon)$-lower bound on 
   the $\NM$ complexity of reachability?
      \item In the memory-nomemory model of communication, we allowed Alice to have unlimited memory, but it is also interesting to consider the scenario where Alice's local memory is bounded (akin to the models considered by~\cite{memoryless:comm,brody2013space})?

    \item Another possible generic technique to approach this question is through a \emph{lifting theorem}: in particular, is there a gadget $g:\01^b\times \01^b\rightarrow \01$ such that for every function $F:\01^{nb}\times \01^{nb}\rightarrow \01$ and $F\circ g$ defined as 
    $$
    F\circ g(x^1,\ldots,x^n,y^1,\ldots,y^n)=F(g(x^1,y^1),\ldots,g(x^n,y^n)),
    $$ we have $\NM(F\circ g)\geq \NM \textsf{query}(F)\cdot b$?\footnote{In this model, an $m$-bit memory query algorithm works in the space of $\01^{\log n+1+m}$ bits. Suppose the goal is to compute $f:\01^n\rightarrow \01$ on input $x$ and the state of the algorithm is $(i,b,w)\in \01^{\log n+1+m}$, then one step of the query algorithm corresponds to the following: one query to $i$ and obtain $b\oplus x_i$, operations on the workspace memory $(w,b)$ in order to produce a new index $i'$ to query the next turn. The cost of such a protocol is the smallest workspace memory $m$ required to compute a function $F$ (on the hardest input $x$). It is not hard to see that the complexity in this query model corresponds to the size of the smallest \emph{branching program} computing $F$.}
   
    \item For every $k\geq 1$, can we find problems that are complete for the class of memoryless communication complexity when Alice and Bob share at most $k$ bits of memory?
    \item In this work, we only looked at the ``deterministic version'' of memoryless communication complexity. One could also look at the model wherein Alice and Bob have private or public randomness and need to compute $F$ with probability at least $2/3$.
 
    \item Is there a notion of \emph{catalytic communication complexity} (similar to the notion of catalytic computation introduced by Buhrman et al.~\cite{buhrman:catalytic}) wherein on top of memoryless-ness, the message register is filled with a pre-loaded data and at the end of the computation Alice and Bob needs to restore the data. It seems that in order to do this, the computation has to be reversible i.e., from Alice's (Bob's) perspective given an arbitrary input $x$ (input~$y$) two incoming messages \emph{cannot} be mapped into a single message. We remark that this model looks similar to the garden-hose model of computation which is also reversible and memoryless~\cite{podder:garden}.
\end{enumerate}

\paragraph{Organization.} 
In Section~\ref{sec:prelim} we describe the basic communication complexity model and branching programs. Section~\ref{sec:memcommmode} defines the memoryless communication model as well as other variants of this model. 
In Section~\ref{sec:charintermsofBP} we present various algorithms and characterize the complexity of memoryless communication in terms of bipartite branching program sizes. 
Finally, Section~\ref{sec:understandmem} presents some separations between the memoryless models and describes the equivalence between the memoryless model and the garden-hose and space bounded models.

\section{Preliminaries}
\label{sec:prelim}

 \paragraph{Notation.} Let $[n]=\{1,\ldots,n\}$. For $x\in \01^n$, let $\Int(x)\in \{0,\ldots,2^n-1\}$ be the integer representation of the $n$-bit string $x$. We now define a few standard functions which we use often in this paper. The equality function $\EQ_n:\01^n\rightarrow \01^n\rightarrow \01$ is defined as $\EQ_n(x,y)=1$ if and only if $x=y$. The disjointness function $\DISJ_n$ defined as $\DISJ_n(x,y)=0$ if and only if there exists $i$ such that $x_i=y_i=1$. The inner product function $\IP_n$ is defined as $\IP(x,y)=\sum_{i} x_i\cdot y_i \pmod{2}$ (where $\cdot$ is the standard bit-wise product).

 \paragraph{Quantum information.} We briefly review the basic concepts in quantum information theory. Here a qubit $\ket{\psi}$ is a unit vector in $\mathbb{C}^2$ and the basis for this space is denoted by $\ket{0}=\left(\begin{array}{c}1 \\ 0 \end{array}\right)$ and $\ket{1}=\left(\begin{array}{c} 0 \\ 1 \end{array}\right)$. An arbitrary $\ket{\psi}$ is in a \emph{superposition} of $\ket{0},\ket{1}$, i.e., $\ket{\psi}=\alpha\ket{0}+\beta\ket{1}$ for $\alpha,\beta\in \mathbb{C}$ satisfying $|\alpha|^2+|\beta|^2=1$.  In order to obtain a quantum state on $n$ qubits, one can take the tensor product of single-qubit states, hence an arbitrary $n$-qubit state $\ket{\phi}$ is a unit vector in $\mathbb{C}^{2^n}$  and can be expressed as $\ket{\psi}=\sum_{x\in \01^n}\alpha_x \ket{x}$ where $\alpha_x\in \mathbb{C}$ and $\sum_x |\alpha_x|^2=1$. 

We now define formulae, branching programs and refer the interested reader to Wegener's book~\cite{wegener1987complexity} for more on the subject. 

\begin{definition}[De Morgan Formulae]
A De Morgan formula is a binary
tree whose internal nodes are marked with AND gates or OR gates and the leaves are marked with input variables $x_1, x_2, \cdots , x_n$ or their negations.
we say a depth-$d$ De Morgan formula computes a function $f:\01^n\rightarrow \01$, if there is a binary tree of depth $d$ such that on input $x= (x_1, x_2, \cdots , x_n)$, the root of the tree outputs $f(x)$. The De Morgan formula size of a function $f$ is the size of the smallest De Morgan formula computing $f$. 
\end{definition}

\begin{definition}[General Formulae]
General formulas are same as De Morgan formulas except that the nodes could consist of gates corresponding to any arbitrary $2$-bit functions $f:\01^2 \rightarrow \01$. Note that De Morgan formulas consisted only of AND, OR gates whereas in general formulas there could be 16 such possible gates in the tree.
\end{definition}

\vspace{0mm}
	\begin{definition}[Branching programs $(\BP)$]
		A branching program for computing a Boolean function $f:\01^n\rightarrow \01$ is a directed acyclic graph with a source node labelled  $S$ and two sink nodes labelled $0$ and $1$. Every node except the source and sink nodes are labelled by an input variable $x_i$. The out-degree of every node is two and the edges are labelled by $0$ and $1$. The source node has in-degree $0$ and the sink nodes have out-degree $0$. The size of a branching program is the number of nodes in it. We say a branching program computes $f$ if for all $x\in f^{-1}(1)$ (resp.~$x\in f^{-1}(0)$) the algorithm starts from the source, and depending on the value of $x_i\in \01$ at each node the algorithm either moves left or right and eventually reaches the $1$-sink (resp.~$0$-sink) node. We denote $BP(f)$ as the size (i.e., the number of nodes) of the smallest branching program that computes $f$ for all $x\in \01^n$.
		\end{definition}

	We now define the standard communication complexity model defined by Yao~\cite{yao:comm}.
	
\vspace{0mm}
\begin{definition}[Standard communication complexity]
\label{def:ccyao}
Let $F:\01^n\times \01^n\rightarrow \01$. Here 
two players Alice and Bob want to compute $F$ in the following manner: Alice receives $x\in \01^n$, Bob gets $y\in \01^n$ and they are allowed to exchange bits before computing $F(x,y)$. We say a protocol computes $F$ if for every $x,y$, Alice and Bob compute $F(x,y)$ with probability $1$.  
	The models we describe below vary in how the communication protocol proceeds between Alice and Bob and measures the complexity of the protocols in a different way. 
	\begin{enumerate}
		\item Standard \emph{one-way} communication complexity: Here we restrict only Alice to send bits to Bob. The complexity of the protocol in this model is the total number of communication between Alice and Bob and finally Bob needs to output $F(x,y)$. The classical complexity of this model is denoted $D^{\rightarrow}(F)$. Suppose they exchange quantum bits, then due to the inherent randomness in quantum states, we allow them to output $F(x,y)$ with probability at least $2/3$. The quantum complexity of computing $F$ in this model is denoted by $Q^{\rightarrow}(F)$.
		
		\item  Standard \emph{two-way} communication complexity: This is exactly the same as $D^\rightarrow(F)$, except that both Alice and Bob are allowed to send bits to one another and the classical complexity is denoted $\textsf{D}(F)$. Suppose they are allowed to exchange quantum bits, then the quantum complexity is denoted  $\textsf{Q}(F)$
	\end{enumerate}
\end{definition}

\section{Memoryless Communication Complexity}
\label{sec:memcommmode}
	In this section we define memoryless communication complexity model and its variants. 

	\subsection{Deterministic Memoryless Communication Model}
	The crucial difference between the memoryless communication model and  standard communication model is that, at any round of the communication protocol Alice and Bob do not have memory to remember previous transcripts and their private computations from the previous rounds. We now make this formal. 

\vspace{0mm}
	\begin{definition}[Two-way Deterministic memoryless communication complexity]
		\label{definition:twowaymemoryless}
         Let $F:\01^n\times \01^n\rightarrow \01$. Here there are two parties Alice and Bob whose goal is to compute $F$. Every $s$-bit memoryless protocol is defined by a set of functions $\{f_x\}_{x\in \01^n}$ and $\{g_y\}_{y\in \01^n}$ wherein $f_x,g_y:\01^s\rightarrow \01^s$. On input $x,y$ to Alice and Bob respectively a memoryless protocol is defined as follows: 
 		 at every round Alice obtains a message $m_B\in \01^s$ from Bob, she computes $m_A=f_x(m_B)\in \01^s$ and sends $m_A$ to Bob. On receiving $m_A$, Bob computes $m_B'=g_y(m_A)$ and replies with $m'_B\in \01^s$ to Alice. They alternately continue doing this for every round until the protocol ends. Without loss of generality we assume the protocol ends once $m_A,m_B\in \{1^{s-1}0,1^{s-1}1\}$, then the function output is given by the last bit. So, once the transcript is $1^{s-1}b$, Alice and Bob output $F(x,y)=b$.\footnote{Without loss of generality, we assume that the first message is between Alice and Bob and she sends $f_x(0^s)\in \01^s$ to Bob.}

		We say a protocol \emph{$P_F$ computes $F$ correctly} if for every $(x,y)$,  Bob outputs $F(x,y)$. We let $\textsf{cost}(P_F,x, y)$ be the smallest $s$ for which $P_F$ computes $F$ on input $(x,y)$. Additionally, we~let
        $$
        \textsf{cost}(P_F) = \max_{x,y} \textsf{cost}(P_F,x, y)
        $$
	and the \emph{memoryless communication complexity} of computing $F$ in this model is defined as
		$$
		\NM(F) = \min_{P_F} cost(P_F),
		$$
		where is the minimum is taken over all protocols $P_F$ that compute $F$ correctly.
		\end{definition}
	We crucially remark that in the memoryless model, the players do not even have access to a clock and hence they cannot tell which round of the protocol they are in. At every round they just compute their local functions $\{f_x\}_x,\{g_y\}_y$ on the message they received and proceed according to the output of these  functions.

	\paragraph{One-way Deterministic Memoryless Model.} 
	Similar to the definition above, one can define the \emph{one-way} memoryless communication complexity wherein only Alice is allowed to send messages to Bob and the remaining aspects of this model is the same as Definition~\ref{definition:twowaymemoryless}. We denote the complexity in this model by $\NM^\rightarrow(F)$.
	It is easy to see that since Alice does not have any memory she cannot send multi-round messages to Bob as there is no way for her to remember in which round she is in. Also Bob cannot send messages back to Alice for her to keep a clock. Hence all the information from Alice to Bob has to be conveyed in a single round. Thus one-way memoryless communication complexity is equal to the standard deterministic one-way communication~complexity.\footnote{Without loss of generality, in any one-way standard communication complexity protocol of cost $c$ Alice can send all the $c$ bits in a single round.}

\vspace{0mm}
\begin{fact}
    \label{fact:nm-oneway_equal_det_one_way}
    For all function $F$ we have $\NM^\rightarrow(F) = \textsf{D}^\rightarrow(F)$. 
    \end{fact}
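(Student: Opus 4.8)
The plan is to prove both inequalities $\NM^\rightarrow(F) \le \textsf{D}^\rightarrow(F)$ and $\textsf{D}^\rightarrow(F) \le \NM^\rightarrow(F)$ by directly converting protocols in one model into protocols in the other. For the direction $\NM^\rightarrow(F) \le \textsf{D}^\rightarrow(F)$: start with an optimal one-way deterministic protocol of cost $c = \textsf{D}^\rightarrow(F)$. As noted in the footnote, without loss of generality Alice sends all $c$ bits in a single message; this is because in the standard one-way model Alice has memory, so she can simply compute the entire string she would have sent over all rounds and concatenate it. Then in the one-way memoryless model, Alice — using only her input $x$ and the (empty/trivial) incoming message, i.e.\ $f_x(0^c)$ — sends exactly this $c$-bit message to Bob, who computes $F(x,y)$ from it and his input $y$. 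Since this is a single round, Bob never needs a clock or memory of previous rounds, so memorylessness costs nothing. One should double-check the formatting convention (the transcript must end in $1^{s-1}b$) is compatible; this is a minor technical adjustment — pad $c$ by one bit if needed so the final bit encodes the answer in the prescribed terminating form.

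For the reverse direction $\textsf{D}^\rightarrow(F) \le \NM^\rightarrow(F)$: take an optimal one-way memoryless protocol of cost $s = \NM^\rightarrow(F)$. The key observation — already flagged in the paragraph preceding the Fact — is that in a one-way memoryless protocol Alice cannot carry out genuine multi-round communication: she has no memory and receives nothing back from Bob, so every message she ever sends is $f_x(0^s)$, the same string in every round. Hence the protocol conveys no more than that single $s$-bit string, and Bob must determine $F(x,y)$ from $f_x(0^s)$ and $y$ alone. This immediately gives a standard one-way protocol of cost $s$: Alice sends $f_x(0^s)$, Bob outputs $F(x,y)$. Combining the two inequalities yields the claimed equality.

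The only real subtlety — and the step I would be most careful with — is the argument that memorylessness genuinely collapses a one-way protocol to a single effective message. One must rule out any trick whereby Alice's later messages differ from her first: since Alice's next message is $f_x$ applied to whatever she "receives," and in a one-way protocol she receives nothing (or a fixed null symbol) in every round, the sequence of messages she sends is the constant sequence $f_x(0^s), f_x(0^s), \ldots$. Thus Bob gains nothing from additional rounds. Everything else is bookkeeping about message formatting and the termination convention, which I would handle with a one-line remark rather than a detailed calculation.
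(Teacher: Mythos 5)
Your proposal is correct and follows essentially the same reasoning as the paper, which argues (in the paragraph preceding the Fact and its footnote) that a memoryless Alice with no return channel must send the same message every round, so the protocol collapses to a single message, and conversely that a cost-$c$ one-way deterministic protocol can without loss of generality be compressed into one $c$-bit message and thus implemented memorylessly. You have merely spelled out the two inequalities more explicitly than the paper does, but the key observation and the conversions are the same.
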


    \subsection{Deterministic Memory-No Memory Communication Model}
We now consider another variant of the memoryless communication model wherein one party is allowed to have a memory but the other party doesn't. 
In this paper, we always assume that Alice has a memory and call this setup the \emph{memory no-memory model}. In this work, we will \emph{not} consider the other case wherein Bob has a memory and Alice doesn't have a memory. Note that this setting is asymmetric i.e., there exists functions for which the complexity of the function can differ based on whether Alice or Bob has the memory.

\paragraph{Two-way Memory-No Memory Communication Model.} 
Here the players are allowed to send messages in both directions. For a function $F:\01^n\times \01^n\rightarrow \01$, we denote the complexity in this model as $\M(F)$. Observe that $\M(F)$ is trivially upper bounded by $\log n$ for every~$F$: for every $i\in [n]$, Alice can send $i$ and Bob replies with $y_i$. Since Alice has memory, after~$n$ rounds she has complete knowledge of $y\in \01^n$ and computes $F(x,y)$ locally and sends it to Bob.

    \vspace{0mm}
\paragraph{One-way Memory-No Memory Communication Model.}
Here we allow only Alice to send messages to Bob. Since Alice has a memory she can send multiple messages one after another, but Bob cannot reply to her messages. Hence, after receiving any message Bob computes the function $g_y(\cdot)\in \{0,1,\perp\}$ and if he obtains $\01$, he outputs $0$ or $1$, and continues if he obtains~$\perp$. 
We denote the communication complexity in this model by $\M^{\rightarrow}(F)$. 
This model was formally studied by Papakonstantinou et al.~\cite{memoryless:comm} as \emph{overlay communication complexity} (we discuss their main contributions in Section~\ref{sec:charintermsofBP}). 

Finally, we can also have a model where both players have memory and hence both players can remember the whole transcript of the computation. This is exactly the widely-studied standard communication complexity except that the complexity measure here is the size of the \emph{largest} transcript (so the complexity in our model is just $1$ since they could exchange a single bit for $n$ rounds and compute  an arbitrary function on $2n$ bits) and the latter counts the \emph{total} number of bits exchanged in a protocol.

\vspace{0mm}
\paragraph{Quantum memoryless Models.} Here we introduce the quantum memoryless communication model. There are a few ways one can define the quantum extension of the classical memoryless model. We find the following exposition the simplest to explain. This quantum communication model is defined exactly as the classical memoryless model except that Alice and Bob are allowed to communicate \emph{quantum} states. A $T$~round quantum protocol consists of the following: Alice and Bob have local $k$-qubit memories $\mathsf{A},\mathsf{B}$ respectively,\footnote{After each round of communication, these registers  are set to the all-$0$ register.} they share a $m$-qubit message register~$\mathsf{M}$ and for every round they perform a $q$-outcome POVM $\mathcal{P}=\{P_1,\ldots,P_{q}\}$ for $q=2^m$ (which could potentially depend on their respective inputs $x$ and $y$). Let $\{U_x\}_{x\in \01^n},\{V_y\}_{y\in \01^n}$ be the set of $(m+k)$-dimensional unitaries acting on $(\mathsf{A},\mathsf{M})$  and $(\mathsf{B},\mathsf{M})$ respectively (this is analogous to the look-up tables $\{f_x,g_y:\01^m\rightarrow \01^m\}_{x,y\in \01^n}$  used by Alice and Bob in the classical memoryless protocol). Let $\psi_0=(\mathsf{A},\mathsf{M})$ be the all-$0$ mixed state. Then, the quantum protocol between Alice and Bob can be written as follows: on input $x,y$ to Alice and Bob respectively, on the $i$th round (for $i\geq 1$) Alice sends ${\psi_i}$ for odd $i$ and Bob replies with ${\psi_{i+1}}$ defined as follows: 
$$
\psi_i=\Tr_{\mathsf{A}} \big(\mathcal{P}\circ U_x {\psi_{i-1}}\big)\otimes \ketbra{0}{0}_{\mathsf{B}},
$$
where $\mathcal{P}\circ U_x {\psi_{i-1}}$ is the post-measurement state after performing the POVM $\mathcal{P}$ on the state $U_x \psi_{i-1}$ and $\Tr_A(\cdot)$ refers to taking the partial trace of register $\mathsf{A}$.  Similarly, define 
$$
\psi_i=\ketbra{0}{0}_{\mathsf{A}} \otimes \Tr_{\mathsf{B}} \big(\mathcal{P}\circ U_y {\psi_{i}}\big),
$$
where $\Tr_{\mathsf{B}}(\cdot)$ takes the partial trace of  register $\mathsf{B}$. Intuitively, the states $\psi_i$ (similarly $\psi_{i+1}$) can be thought of as follows: after applying unitaries $U_x$ to the registers $(\mathsf{A},\mathsf{M})$, Alice applies the $q$-outcome POVM $\mathcal{P}$ which results in a classical outcome and post-measurement state on the registers $(\mathsf{A},\mathsf{M})$ and she discards her private memory register and initializes the register $\mathsf{B}$ in the all-$0$ state. The quantum communication protocol terminates at the $i$th round once the $q$-outcome POVM $\mathcal{P}$ results in the classical outcome $\{(1^{m-1},b)\}_{b\in \01}$.\footnote{We remark that a good quantum communication protocol should be such that for every $i\in [T]$, the probability of obtaining $(1^{m-1},1\oplus F(x,y))$ when measuring $\psi_i$ using the POVM $\mathcal{P}$ should be $\leq 1/3$.} After they obtain this classical output, Alice and Bob output $b$.
We say a \emph{protocol computes $F$} if for every $x,y\in \01^n$, with probability at least $2/3$~(probability taken over the randomness in the protocol), after a certain number of rounds the POVM measurement results in $(1^{m-1},F(x,y))$. The complexity of computing $F$ in the quantum memoryless model, denoted $\QNM(F)$ is the smallest $m$ such that there is a $m$-qubit message protocol that computes $F$. As defined before, we also let $\QM^{\rightarrow}(F)$ (resp.~$\QNM^{\rightarrow}(F)$) to be the model in which Alice has a memory (has no memory) and Bob doesn't have a memory and the communication happens from Alice to Bob.

\paragraph{Notation.} For the remaining part of the paper we abuse notation by letting $\NM(F)$, $\QNM(F)$ denote the memoryless \emph{complexity} of computing $F$ and we let  \emph{$\NM$ model} (resp.~\emph{$\QNM$ model}) be the  memoryless communication model (resp.~quantum memoryless communication model). Additionally, we omit \emph{explicitly} writing that Alice and Bob exchange the final message $1^{s-1}f(x,y)$ once either parties have computed $f(x,y)$ (on input $x,y$ respectively).

\section{Understanding and characterization of memoryless models}
\label{sec:charintermsofBP}
We now state a few  observations and relations regarding the memoryless communication~models.

\vspace{0mm}
	\begin{fact}
		\label{claim:firstinequality}
		For every $F:\01^n\times \01^n\rightarrow \01$, we have
		$$
		\M(F)\leq 	\NM(F)  \leq	2\M^{\rightarrow}(F)\leq	2\NM^{\rightarrow}(F).
		$$
	\end{fact}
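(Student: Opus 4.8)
The plan is to split the chain $\M(F)\le\NM(F)\le 2\M^\rightarrow(F)\le 2\NM^\rightarrow(F)$ into three inequalities, two of which are immediate from inclusion of models and one of which carries all the content.

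First, $\M(F)\le\NM(F)$: an $s$-bit two-way memoryless protocol is specified by tables $f_x,g_y\colon\01^s\to\01^s$ acting on the incoming message alone, so it is already a legal memory-no-memory protocol in which Alice simply declines to use her memory; hence $\M(F)\le s$ for the optimal $s=\NM(F)$. The same observation shows $\M^\rightarrow(F)\le\NM^\rightarrow(F)$ (a one-way memoryless protocol is the special case of a one-way memory-no-memory protocol where Alice ignores her memory), and doubling gives $2\M^\rightarrow(F)\le 2\NM^\rightarrow(F)$.

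The substantive inequality is $\NM(F)\le 2\M^\rightarrow(F)$. Write $t=\M^\rightarrow(F)$ and fix an optimal one-way memory-no-memory protocol. Since Bob never replies, Alice gets no feedback, so her stream of $t$-bit messages depends only on $x$, and Bob is described by a single map $g_y\colon\01^t\to\{0,1,\perp\}$, the output being $g_y(m)$ for the first message $m$ Alice sends satisfying $g_y(m)\ne\perp$. Dropping repeated messages and appending the remaining $t$-bit strings at the end changes nothing (correctness forces the first non-$\perp$ hit to occur already in the original list), so without loss of generality Alice's stream is an enumeration $\pi_x(0),\pi_x(1),\dots,\pi_x(2^t-1)$ of all of $\01^t$.

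I would then build a two-way memoryless protocol with a $2t$-bit message register viewed as a pair $(j,m)$, where $j,m\in\01^t$ are an index field and a block field. The design maintains the invariant that $j$ is the current round number: Alice's table maps $(j,\,\cdot\,)\mapsto(j,\pi_x(j))$, recomputing the block from the index, while Bob's table maps $(j,m)\mapsto(j+1,0^t)$ when $g_y(m)=\perp$ and maps $(j,m)$ to the halting string $1^{2t-1}b$ when $g_y(m)=b$. The forced first message $0^{2t}$ is read as index $0$, so Alice opens with $(0,\pi_x(0))$, and the protocol halts within $2^t$ rounds with the correct value $F(x,y)$. The one point requiring care — and the only real obstacle to getting the clean factor $2$ — is the halting convention: one must encode $\pi_x$ and Bob's continue-messages so that no non-terminal message either player sends coincides with $1^{2t-1}0$ or $1^{2t-1}1$ (e.g.\ by keeping strings of the form $1^{t-1}b$ out of the last slot of $\pi_x$ and zero-padding Bob's continue-messages). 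Granting this, the construction gives $\NM(F)\le 2t=2\M^\rightarrow(F)$, completing the chain.
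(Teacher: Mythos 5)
Your proof is correct and follows essentially the same route as the paper: the outer inequalities are observed to hold by inclusion of models, and for the substantive inequality $\NM(F)\le 2\M^{\rightarrow}(F)$ you attach the round index $j$ to Alice's message so that the memoryless Alice can recompute where she is in the one-way stream, paying a factor of $2$ in register size. You are slightly more careful than the paper about enumerating Alice's stream and about keeping the halting strings $1^{2t-1}b$ out of the non-terminal messages, but the underlying construction is the same.
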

\begin{proof}
The proof of the first and last inequality is straightforward since the complexity of a protocol only increases when we force a party to not have memory. We now show the second inequality, suppose $\M^{\rightarrow}(F)=k$. Then there are at most $2^k$ messages Alice sends (suppose at round $i$, she sends a message $m$ and Bob didn't output $\{0,1\}$, then she knows that $g_y(m)=\perp$, so she need not repeat sending $m$ using her memory) and suppose these $2^k$ messages Alice sends are $\{f_x(m_1),\ldots,f_x(m_{2^k})\}$. The $\NM$ protocol for $f$ goes as follows, at the $i$th round Alice sends Bob $(f_x(m_i),i)$ which uses $2k$ bits. If Bob does not outout $\{0,1\}$ he simply increments $i$ to $i+1$ and sends back $i+1$, which takes at most $k$ bits. Since Alice doesn't have a memory but has received $i+1$ and hence has the information of the round $i+1$, she runs the $\M^\rightarrow(F)$ protocol for $(i+1)$th round and her next message to Bob is $(f_{x}(m_{i+1}),i+1)$. Alice and Bob continue the protocol this way until Bob outputs 0 or 1. Since the original $\M^{\rightarrow}(F)$ protocol computes $F$, the $\NM$ protocol computes $F$ as well. 
\end{proof}

	 As we mentioned earlier, our main contribution in this paper is the  memoryless $\NM$ model of communication. We saw in Fact~\ref{fact:nm-oneway_equal_det_one_way} that $\NM^{\rightarrow}(F)$ is equal to the standard one-way deterministic communication complexity of computing $F$. The $\M^{\rightarrow}(F)$ model was introduced and studied by Papakonstantinou et al.~\cite{memoryless:comm}. Additionally observe that the strongest model of communication complexity~$\M(F)$ is small for every function $F$.

\vspace{0mm}
\begin{fact}
\label{fact:MQMissmall}
    For every $F:\01^n\times \01^n\rightarrow \01$, we have $\M(F)\leq \log n$.
\end{fact}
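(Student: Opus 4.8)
The claim is that $\M(F) \leq \log n$ for every $F : \01^n \times \01^n \rightarrow \01$, where $\M$ denotes the two-way memory-no-memory communication complexity. The plan is to exhibit an explicit protocol with message register of size $\lceil \log n\rceil$ (or $\log n + O(1)$, which suffices for the stated asymptotic bound), following essentially the strategy already sketched informally right before Fact \ref{fact:MQMissmall} in the text and in the $\EQ_n$ example in the introduction.

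The protocol goes as follows. Alice has memory, so she can maintain a counter $i$ ranging over $[n]$ and can remember everything Bob has told her so far. At round $i$, Alice sends the $\lceil\log n\rceil$-bit message $i$ to Bob; Bob, who is memoryless, applies $g_y$ to the incoming message $i$ and replies with the single bit $y_i$ (padded into the $\lceil \log n\rceil$-bit register if one insists on fixed width). Crucially Bob needs no memory: his reply depends only on his input $y$ and the received message $i$. Alice records $y_i$, increments her counter, and repeats. After $n$ rounds Alice has learned all of $y \in \01^n$, so she computes $F(x,y)$ locally and transmits the answer (in the terminating-transcript convention of Definition \ref{definition:twowaymemoryless}, she sends $1^{s-1}F(x,y)$). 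The largest message ever sent has size $\lceil \log n\rceil$, so $\textsf{cost}$ of this protocol is $\lceil\log n\rceil$, giving $\M(F)\leq \lceil \log n\rceil$, and in particular $\M(F) \leq \log n$ up to the rounding that the paper evidently absorbs into its $\log$ notation.

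There is essentially no obstacle here; the only point requiring a sentence of care is checking that the protocol is legitimately memory-no-memory, i.e., that Bob's action is a single function $g_y : \01^s \rightarrow \01^s$ of only the current message and his input, with no round counter on Bob's side — which holds because the index $i$ that tells Bob "which coordinate to report" is carried in the message register itself rather than in Bob's (nonexistent) memory. One should also note the tie-in with Fact \ref{claim:firstinequality}: since $\M(F) \leq \NM(F) \leq 2\M^\rightarrow(F)$, and the one-way overlay model $\M^\rightarrow(F)$ is likewise trivially $\leq \log n$ by the analogous index-querying protocol (Alice sends $(i, \text{request})$, Bob answers), the bound is consistent across the hierarchy. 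The statement and its proof are included mainly for completeness, to contrast with the memoryless model where — as the introduction stresses — a single-bit register no longer suffices for almost all functions.
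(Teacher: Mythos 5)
Your proposal is correct and is essentially identical to the paper's own proof: Alice sends the index $i$, memoryless Bob replies with $y_i$, and Alice (who has memory) accumulates all of $y$ over $n$ rounds and then evaluates $F(x,y)$ locally. The additional remarks about fixed-width padding, the termination convention, and consistency with Fact~\ref{claim:firstinequality} are fine but not needed.
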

	To see this, observe that in the $\M$ model (i.e., two-way memory-no memory model), on the $i$th round, Alice sends $i\in [n]$ and Bob (who doesn't have memory) sends the message $y_i$ to Alice. Alice stores $y_i$ and increments $i$ to $i+1$ and repeats. After $n$ rounds Alice simply has the entire $y$ and computes $F(x,y)$ on her own (note that $F$ is known to both Alice and Bob). 
	
Below we give few protocols in the $\NM$ model to give more intuition of this~model. 

\paragraph{Algorithms in the memoryless model:}
In the introduction we described a $\log n+1$ protocol for the equality function. Below we describe a protocol for the inner product function.
For the inner product function $\IP_n$, a simple protocol is as follows: For $i=1,\ldots,n$, on the $i$th round, Alice~sends  
		$$
		\Big(i,x_i, \sum_{j=0}^{i-1}x_i \cdot y_i\pmod 2  \Big)
		$$ 
		which takes $\log n+2$ bits and Bob replies with 
		$$
		\big(i,x_i, \sum_{j=0}^{i-1}x_i \cdot y_i+x_i\cdot y_i \pmod 2 \Big)=\Big(i,x_i, \sum_{j=0}^{i}x_i \cdot y_i \pmod 2 \Big).\footnote{Technically Bob need not send back the bit $x_i$.}
		$$
		They repeat this protocol for $n$ rounds and after the $n$th round, they have computed $\IP_n(x,y)$. Hence $\NM(\IP_n)\leq \log n+2$. Now we describe a protocol for the disjointness function $\DISJ_n$. Here a $\log n$ protocol is as follows: Alice sends the first coordinate $i\in [n]$ for which $x_i=1$ and Bob outputs~$0$ if $y_i=1$, if not Bob replies with the first $j$ after $i$ for which $y_j=1$ and  they repeat this procedure until $i$ or $j$ equals $n$. It is not hard to see that  $\DISJ_n(x,y)=0$ if and only if there exists $k$ for which $x_k=y_k=1$ in which case Alice and Bob will find such (smallest) $k$ in the protocol above, if not the protocol will run for at most $n$ rounds and they decide that $\DISJ_n(x,y)=1$. 
We now mention a non-trivial protocol in the $\NM$ model for the majority function defined as $\MAJ_n(x,y)=\big[\sum_i x_i\cdot y_i \geq n/2+1]$. A trivial protocol for $\MAJ_n$ is similar to the $\IP_n$ protocol, on the $(i+1)$th round, Alice sends $(i,x_i,\sum_{i=1}x_i\cdot y_i)$ (without the $\pmod 2$) and Bob replies with $(i,x_i,\sum_{i=1}^{n+1}x_i\cdot y_i)$. Note that this protocol takes $2\log n+1$ bits ($\log n$ for sending the index $i\in [n]$ and $\log n$ to store $\sum_{i=1}^{n}x_i\cdot y_i \in [n]$). Apriori this seems the best one can do, but interestingly using intricate ideas from number theory there exists a $n\log^3 n$~\cite{sinha:majorityBP,podder:garden} garden-hose protocol for computing $\MAJ_n$. Plugging this in with Theorem~\ref{lem:nmghm} we get a protocol of cost $\log n+3\log \log n$ for computing $\MAJ_n$ in the $\NM$ model.

An interesting question is, are these protocols for $\IP_n, \EQ_n$, $\DISJ_n$, $\MAJ_n$ optimal? 
 Are there more efficient protocols possibly with constant bits of communication in each round? In order to understand this, in the next section we show that the memoryless communication complexity is lower bounded by the standard detereministic one-way communication complexity. Using this connection, we can show the tightness of the first three protocols. Additionally, we show that $\NM(\MAJ_n)\geq \log n$, thus the exact status of $\NM(\MAJ_n)\in \{\log n,\ldots,\log n+3\log \log n\}$  remains an intriguing open~question.

 \subsection{Lower bounds on memoryless communication complexity}
 In the introduction, we mentioned that it is an interesting open question to find an explicit function~$F$ for which $\NM(F)\geq 2\log n$. Unfortunately we do not even know of an explicit function for which we can prove lower bounds better than $\log n+\omega(1)$ (we discuss more about this in the open questions). However, it is not hard to show that for a random function $F$, the memoryless communication complexity of $F$ is large.

	\begin{lemma}
		\label{lemma:randomfhasighR}
		Let $F:\01^n\times \01^n\rightarrow \01$ be a random function. Then,
		$
		\NM(F)=\Omega(n).
		$
	\end{lemma}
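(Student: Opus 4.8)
The plan is to use a counting argument: show that the total number of distinct $s$-bit memoryless protocols is small (doubly exponential in $s$), whereas the number of Boolean functions $F:\01^n\times\01^n\to\01$ is $2^{2^{2n}}$, and conclude that for $s$ below some threshold of order $n$ almost no function admits a protocol of cost $s$. A random $F$ therefore requires $\NM(F)=\Omega(n)$ with probability $1-o(1)$.

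First I would count the protocols of cost $s$. A memoryless protocol is specified by the families $\{f_x:\01^s\to\01^s\}_{x\in\01^n}$ and $\{g_y:\01^s\to\01^s\}_{y\in\01^n}$. For a fixed $x$, the map $f_x$ is one of $(2^s)^{2^s}=2^{s2^s}$ functions, and there are $2^n$ choices of $x$, so the number of possibilities for the whole family $\{f_x\}$ is $\big(2^{s2^s}\big)^{2^n}=2^{\,s\,2^s\,2^n}$; the same bound holds for $\{g_y\}$. Hence the number of $s$-bit protocols is at most $2^{\,2\,s\,2^s\,2^n}$. (Using the equivalence $\NM(F)$ and $\log\GH(F)$ from Theorem~\ref{lem:nmghm}, or the branching-program characterization of Theorem~\ref{thm:d_nomem_logT}, one could alternatively count bipartite branching programs of size $2^s$, which gives the same kind of bound; but the direct count above is cleanest.)

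Next I would compare with the number of functions. Each protocol computes exactly one function $F$ (the protocol is deterministic once $x,y$ are fixed, so $F(x,y)$ is determined). Therefore the number of functions $F$ with $\NM(F)\le s$ is at most $2^{\,2\,s\,2^s\,2^n}$. The total number of functions is $2^{2^{2n}}$. If we pick $s=cn$ for a sufficiently small constant $c>0$, then $2\,s\,2^s\,2^n = 2cn\cdot 2^{cn+n} = 2cn\cdot 2^{(1+c)n}$, which is $o(2^{2n})$ as long as $1+c<2$, i.e.\ $c<1$ (the polynomial factor $2cn$ is absorbed). So the fraction of functions with $\NM(F)\le cn$ is at most $2^{\,2cn\,2^{(1+c)n}-2^{2n}}=o(1)$. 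Consequently a uniformly random $F$ has $\NM(F)>cn$ with probability $1-o(1)$, which gives $\NM(F)=\Omega(n)$.

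I do not expect a serious obstacle here; this is a standard incompressibility/counting argument. The one point that needs a little care is making sure the count of protocols is an \emph{upper} bound on the number of computable functions — i.e.\ that distinct functions need distinct protocols, which is immediate since a protocol determines its function — and choosing the constant $c$ so that the exponent $2cn\,2^{(1+c)n}$ is genuinely dominated by $2^{2n}$; any $c<1$ works, and for concreteness one can take $c=1/2$, giving $\NM(F)\ge n/2$ for a $1-o(1)$ fraction of all functions. If one instead wants the sharper constant matching the $n$ upper bound (every $F$ has $\NM(F)\le n$ trivially, since Alice can send $(i,x_i)$ in $\log n+1$ bits — wait, that is only $\log n+1$; the true trivial upper bound is $\NM(F)\le n$ via a protocol that ships all of $x$), one would optimize $c$, but $\Omega(n)$ as stated needs no such care.
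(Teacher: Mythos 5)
Your proposal is correct and uses essentially the same counting argument as the paper: bound the number of $s$-bit memoryless protocols by counting the families $\{f_x\}_x$ and $\{g_y\}_y$ of lookup tables, yielding $2^{2s\,2^{s+n}}$ protocols, and compare against the $2^{2^{2n}}$ functions. The only cosmetic difference is that the paper solves the resulting inequality directly to get the sharper threshold $s\ge n-\log n-1$, whereas you substitute $s=cn$ for $c<1$, but both give the claimed $\Omega(n)$ bound.
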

	
\begin{proof}
The proof is via a simple counting argument. There are $2^{2^{2n}}$ distinct functions $F:\01^n\times \01^n\rightarrow \01$.  Consider an arbitrary $s$-bit $\NM$ protocol. Let Alice and Bob's local functions be given by $\{f_x:\01^s\rightarrow \01^s\}_{x\in \01^n}$ and similarly $\{g_y\}_y$. 
	First observe that there are at most $2^s$ distinct messages that Alice can receive from Bob and there are at most $2^n$ distinct inputs $x$, so the total number distinct messages that Alice can send to Bob (recall that Alice's messages are given by $f_x(\cdot)\in \01^s$) is at most $(2^s)^{2^s\cdot 2^n} = (2^s)^{2^{s+n}}$. Similarly Bob can send at most $(2^s)^{2^{s+n}}$ distinct messages to Alice. In total there are at most $(2^s)^{2\cdot 2^{s+n}}= (2^s)^{\cdot 2^{s+n+1}}$ distinct protocols that can arise from an $s$-bit $\NM$ protocol. If we have an $\NM$ protocol that computes an arbitrarily random function $F:\01^n\times \01^n\rightarrow \01$, then we need that $(2^s)^{\cdot 2^{s+n+1}} \geq 2^{2^{2n}}$, which implies $s \geq n -\log n -1$.	
\end{proof}	

We remark that similar ideas used in this lemma can be used to show that for all $s<s'$, there exists functions that can be computed using $s'$ bits of communication in each round but not $s$ bits of communication. This gives rise to a \emph{space hierarchy} theorem for the $\NM$ model.

\subsubsection{Deterministic one-way communication complexity and memoryless complexity.}
We now give a very simple lower bound technique for the memoryless communication model in terms of deterministic one-way communication. Although this lower bound is ``almost immediate", as we mentioned in the introduction, it already gives us non-trivial lower bounds on the $\NM$ complexity of certain functions.
    \begin{fact}
    \label{lem:nondeternm}
    $\NM(F) \geq \log \Big(\textsf{D}^{\rightarrow}(F)/ \log \textsf{D}^{\rightarrow}(F)\Big)$.
    \end{fact}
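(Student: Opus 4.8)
The plan is to exhibit a one-way deterministic protocol whose cost is at most $\NM(F) \cdot 2^{\NM(F)}$ or so, and then invert that bound. Let me think about the direction carefully. We want a \emph{lower bound} $\NM(F) \geq \log(\textsf{D}^\rightarrow(F)/\log \textsf{D}^\rightarrow(F))$, which is equivalent to showing an upper bound $\textsf{D}^\rightarrow(F) \leq \NM(F)\cdot 2^{\NM(F)}$ (up to the logarithmic correction). So the real content is: given an $s$-bit memoryless protocol for $F$, simulate it by a one-way deterministic protocol of cost $O(s\cdot 2^s)$.

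Here is the key idea. Fix an $s$-bit $\NM$-protocol computing $F$, given by local functions $\{f_x\}$, $\{g_y\}$ on $\01^s$. From Alice's side, her entire behavior is captured by the single function $f_x : \01^s \to \01^s$, which is one of at most $(2^s)^{2^s}$ possible functions, i.e. can be described by $s\cdot 2^s$ bits. The simulating one-way protocol is: Alice sends Bob the complete truth table of $f_x$, using $s\cdot 2^s$ bits. Bob, who now knows $f_x$ and his own $g_y$, can simulate the entire back-and-forth interaction of the memoryless protocol by himself — he starts from the first message $f_x(0^s)$, applies $g_y$, applies $f_x$, applies $g_y$, and so on until the transcript reaches the terminating form $1^{s-1}b$, at which point he outputs $b = F(x,y)$. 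Since the original protocol computes $F$ correctly and always halts, this simulation halts with the correct answer. Hence $\textsf{D}^\rightarrow(F) \leq s\cdot 2^s$ where $s = \NM(F)$.

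Now invert: from $\textsf{D}^\rightarrow(F) \leq s\cdot 2^s \leq 2^{2s}$ (valid for $s\geq 1$) we could get the crude bound $\NM(F) \geq \tfrac12 \log \textsf{D}^\rightarrow(F)$, but the claimed statement is tighter. To get the stated form, keep $\textsf{D}^\rightarrow(F) \leq s\cdot 2^s$ and take logs: $\log \textsf{D}^\rightarrow(F) \leq s + \log s$, hence $2^s \geq \textsf{D}^\rightarrow(F)/s$. Since $s \leq \log \textsf{D}^\rightarrow(F)$ (from $2^s \leq s\cdot 2^s$ — wait, need $s \leq \textsf{D}^\rightarrow(F)$, which holds as $2^s \leq s 2^s = \textsf{D}^\rightarrow$ gives $s \leq \log\textsf{D}^\rightarrow$), we get $2^s \geq \textsf{D}^\rightarrow(F)/\log \textsf{D}^\rightarrow(F)$, i.e. $s \geq \log(\textsf{D}^\rightarrow(F)/\log\textsf{D}^\rightarrow(F))$, as desired. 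A small caveat: one must also handle the degenerate halting issue — the memoryless protocol is guaranteed to terminate on every input, so Bob's internal simulation is finite; if one worries about Bob not knowing when Alice ``would have'' stopped, note the terminating condition $m_A,m_B \in \{1^{s-1}0, 1^{s-1}1\}$ is built into the model, so Bob detects it.

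The main obstacle — and it is a mild one — is bookkeeping around the exact constant and the logarithmic correction term, plus being careful that the simulation is genuinely one-way (Alice sends once, Bob never replies) and that Bob can indeed reconstruct the full interaction from $f_x$ alone without ever needing to talk back to Alice. This works precisely because in the memoryless model Alice's responses depend only on the incoming message, never on round number or history, so the entire Alice-side is the static object $f_x$. I would also remark, as the paper does, that this same simulation shows $\NM(F)$ (and even $\QNM(F)$, since the argument only used that one-way deterministic communication lower-bounds things, and actually one would want the reverse) is lower bounded by $\log$ of one-way deterministic communication up to the $\log\log$ loss, and flag that this suffices to prove optimality of the $\EQ_n$, $\DISJ_n$, $\IP_n$ protocols given earlier since those functions have $\textsf{D}^\rightarrow = \Theta(n)$.
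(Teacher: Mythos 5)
Your proof is essentially identical to the paper's: let $s=\NM(F)$, have Alice transmit the full truth table of $f_x$ (costing $s\cdot 2^s$ bits), and let Bob locally simulate the back-and-forth to recover $F(x,y)$, giving $\textsf{D}^\rightarrow(F)\leq s\cdot 2^s$; the paper states exactly this and leaves the algebraic inversion implicit. One small slip in your inversion: the step claiming $s\leq \log\textsf{D}^\rightarrow(F)$ from ``$s\cdot 2^s=\textsf{D}^\rightarrow$'' is unjustified, since you only have $\textsf{D}^\rightarrow\leq s\cdot 2^s$, not equality; the clean fix is a case split (if $s\geq \log\textsf{D}^\rightarrow$ you are trivially done since $\log\textsf{D}^\rightarrow\geq\log(\textsf{D}^\rightarrow/\log\textsf{D}^\rightarrow)$, and if $s<\log\textsf{D}^\rightarrow$ then $2^s\geq \textsf{D}^\rightarrow/s>\textsf{D}^\rightarrow/\log\textsf{D}^\rightarrow$), which gives the stated bound.
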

    \begin{proof}
    The proof is trivial. Let $\NM(F)=t$. Once Alice and Bob receive $x,y$ respectively, by definition of the $\NM$ model, Alice has a function $f_x:\01^t\rightarrow \01^t$. She simply sends the truth table of $f_x$ to Bob, which consists of $t\cdot 2^t$ bits. Once Bob receives $f_x$, he can simply simulate the $\NM(F)$ protocol on his own. This gives us our fact statement.
    \end{proof}

\begin{fact}
    $\QNM(F) \geq \Omega \Big(\log \Big(\textsf{D}^{\rightarrow}(F)/ \log \textsf{D}^{\rightarrow}(F)\Big)\Big)$
\end{fact}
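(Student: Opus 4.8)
The plan is to mirror the proof of Fact~\ref{lem:nondeternm}, reducing the quantum memoryless model to quantum one-way communication and then invoking the known relationship $\textsf{D}^{\rightarrow}(F) \leq 2^{O(\textsf{Q}^{\rightarrow}(F))}$ between deterministic and quantum one-way communication complexity (the latter follows from e.g.\ the fact that quantum one-way protocols can be simulated by deterministic ones with an exponential blow-up, via fingerprinting/counting arguments on the reachable message states). Concretely, suppose $\QNM(F) = m$, so that on inputs $x,y$ Alice and Bob hold input-dependent unitaries $U_x, V_y$ acting on an $m$-qubit message register together with a bounded-size private workspace, interleaved with a fixed $q$-outcome POVM $\mathcal{P}$ with $q = 2^m$. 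First I would observe that Alice can describe her entire ``reply table'' --- the quantum channel $\rho \mapsto \mathcal{P}\circ U_x(\rho \otimes \kb{0}_{\mathsf{A}})$ from $m$-qubit inputs to (classical outcome, $m$-qubit output) pairs --- using a finite classical string whose length depends only on $m$ (it is a fixed-dimensional object; one can even discretize to a net of the relevant unitary group and POVM, incurring only a benign additive loss that is absorbed into the $\Omega(\cdot)$). She sends this description to Bob in one round of classical one-way communication, of length $2^{O(m)}$; Bob then simulates the whole $\QNM$ protocol locally and outputs $F(x,y)$ with probability $\geq 2/3$. This yields $\textsf{Q}^{\rightarrow}(F) \leq \textsf{D}^{\rightarrow}_{2/3}(F) \leq 2^{O(m)}$, hence $m = \QNM(F) \geq \Omega(\log \textsf{Q}^{\rightarrow}(F))$.

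To finish, I would combine this with the standard fact $\textsf{Q}^{\rightarrow}(F) = \Omega(\log \textsf{D}^{\rightarrow}(F))$ --- equivalently $\textsf{D}^{\rightarrow}(F) \leq 2^{O(\textsf{Q}^{\rightarrow}(F))}$ --- which holds because a one-way quantum message on $q$ qubits, with success probability $2/3$, can be replaced by a classical description of the underlying mixed message state to precision sufficient for the measurement, again of size $2^{O(q)}$ (this is the classical simulation of one-way quantum communication; a clean reference is the observation that $\textsf{D}^{\rightarrow}(F) \leq O(\textsf{Q}^{\rightarrow}(F) \cdot 2^{\textsf{Q}^{\rightarrow}(F)})$). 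Chaining the two inequalities gives $\QNM(F) \geq \Omega(\log(\textsf{D}^{\rightarrow}(F)/\log \textsf{D}^{\rightarrow}(F)))$, matching the claimed bound up to the constant inside $\Omega(\cdot)$. Alternatively, and more directly, one can skip the detour through $\textsf{Q}^{\rightarrow}$ entirely: Alice's classical description of her reply channel already has length $2^{O(m)}$, so the one-round \emph{classical} message it induces gives $\textsf{D}^{\rightarrow}_{2/3}(F) \leq 2^{O(m)}$, and then using that bounded-error and exact deterministic one-way complexity are polynomially related for total functions (or simply that $\textsf{D}^{\rightarrow}(F) \leq 2^{\textsf{D}^{\rightarrow}_{2/3}(F)} \cdot \poly$, via a VC-dimension / Sauer--Shelah argument on the $2/3$-error communication matrix), one recovers $2^{O(m)} \geq \Omega(\textsf{D}^{\rightarrow}(F)/\log\textsf{D}^{\rightarrow}(F))$.

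The main obstacle --- really the only subtlety --- is handling the fact that the quantum memoryless model is a priori a \emph{continuous} model (arbitrary unitaries and POVMs), so a naive ``send the truth table'' argument does not immediately give a finite message. The fix is a routine discretization: the message register has a fixed dimension $2^m$, the private workspace is bounded, and a $2/3$ success probability is robust to $\poly(1/2^m)$-perturbations of all operators, so a finite $\epsilon$-net of size $2^{2^{O(m)}}$ suffices and its index has bit-length $2^{O(m)}$ --- exactly the regime where the $\Omega(\log(\cdot))$ on the right-hand side swallows the hidden constant. One should also check the bounded-error-to-exact conversion on the one-way communication matrix; this is standard (Sauer--Shelah bounds the number of distinct rows of the $2/3$-approximating matrix, hence the deterministic one-way cost, by a polynomial in the approximate one, and for our purposes even a crude exponential bound is harmless since it is again inside a logarithm).
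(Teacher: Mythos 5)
Your core idea — Alice sends a discretized classical description of her part of the quantum protocol, and Bob simulates the whole thing locally — is exactly what the paper does, and the discretization remarks are fine. But there is a real gap, and it is not a cosmetic one.

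The problem is that you treat Bob's simulation as \emph{probabilistic}, concluding only that the resulting one-way classical protocol succeeds with probability $\geq 2/3$. This is throwing away the crucial fact: once Bob holds an explicit classical description of \emph{both} unitaries $U_x$ and $V_y$ and the POVM, he can \emph{compute the exact output distribution} of the quantum protocol (to sufficient precision) and output the majority outcome. Since the $\QNM$ protocol is correct with probability $\geq 2/3 > 1/2$, the majority outcome is always $F(x,y)$, so Bob's answer is \emph{deterministic and always correct}. This gives $\textsf{D}^{\rightarrow}(F) \leq 2^{O(m)}$ directly, from which $\QNM(F) = m \geq \Omega\!\big(\log(\textsf{D}^{\rightarrow}(F)/\log\textsf{D}^{\rightarrow}(F))\big)$ follows exactly as in Fact~\ref{lem:nondeternm}. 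This is the proof the paper has in mind.

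Both of your ``repair'' routes are not just unnecessary but arithmetically wrong, and deliver an exponentially weaker bound. Chaining $m \geq \Omega(\log \textsf{Q}^{\rightarrow}(F))$ with $\textsf{Q}^{\rightarrow}(F) \geq \Omega(\log\textsf{D}^{\rightarrow}(F))$ gives $m \geq \Omega(\log\log\textsf{D}^{\rightarrow}(F))$, not the claimed $\Omega(\log\textsf{D}^{\rightarrow}(F))$ — you have implicitly swapped a composition of logs for a single one. The same double-exponential loss hits the Sauer--Shelah route: from $\textsf{D}^{\rightarrow}_{2/3}(F) \leq 2^{O(m)}$ and $\textsf{D}^{\rightarrow}(F) \leq 2^{\textsf{D}^{\rightarrow}_{2/3}(F)}\cdot\poly$ you again get $m \geq \Omega(\log\log\textsf{D}^{\rightarrow}(F))$. (The proposed polynomial relationship between exact and bounded-error deterministic one-way complexity is also suspect as stated — think of $\EQ_n$ — but even granting some Sauer--Shelah bound, your chaining loses an exponential.) The fix is simply to not introduce the bounded-error intermediary at all and let Bob majority-vote deterministically.
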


The proof of this fact is exactly the same as Fact~\ref{lem:nondeternm}. Here, instead of Alice sending the truth table $f_x:\01^t\rightarrow \01^t$, Alice instead sends the $t$-qubit unitary $U_x$ which is a matrix in $\mathbb{R}^{2^t\times 2^t}$. So this involves Alice sending $2^{2t}\times \poly(t)$ many bits to Bob (the $\poly(t)$-factor is to ensure that Alice sends each entry up to $2^{-t}$ precision). Bob can then locally simulate the entire quantum $\NM$ protocol (without requiring any further information from Alice) and thereby computes $F$. 

    Using this lemma, we immediately get the following corollary.
    \begin{corollary}
    \label{cor:nmislogn}
Let $n\geq 2$. Then $\NM(\EQ_n),\NM(\IP_n),\NM(\DISJ_n),\NM(\MAJ_n),\NM(\mathsf{Index}),\NM(\mathsf{BHM})$ is $\Omega(\log n)$. Similarly, we have  $\QNM$ complexity of these functions are $\Omega (\log n)$.
\end{corollary}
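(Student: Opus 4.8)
The plan is to read this off directly from Fact~\ref{lem:nondeternm} together with its quantum counterpart stated immediately after it. Both facts say that $\NM(F)$ and $\QNM(F)$ are $\Omega\big(\log(\textsf{D}^{\rightarrow}(F)/\log \textsf{D}^{\rightarrow}(F))\big)$, so the entire statement reduces to the single claim that $\textsf{D}^{\rightarrow}(F)=n^{\Omega(1)}$ — indeed $\Omega(n)$ — for each of $F\in\{\EQ_n,\IP_n,\DISJ_n,\MAJ_n,\mathsf{Index},\mathsf{BHM}\}$. Granting that, plugging $\textsf{D}^{\rightarrow}(F)=\Omega(n)$ into the two facts gives $\log\big(\textsf{D}^{\rightarrow}(F)/\log\textsf{D}^{\rightarrow}(F)\big)=\log n-\Theta(\log\log n)=\Omega(\log n)$, which is exactly the corollary; note in particular that no quantum lower-bound machinery is needed, since the quantum fact is also phrased in terms of the \emph{deterministic} one-way complexity.

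It remains to record the one-way lower bounds. I would use the standard fact that $\textsf{D}^{\rightarrow}(F)=\lceil\log r\rceil$, where $r$ is the number of distinct rows of the communication matrix of $F$ (rows indexed by Alice's input). For $\EQ_n$ the matrix is a permutation of the identity, so $r=2^n$; for $\IP_n$ the matrix $\big((-1)^{\langle x,y\rangle}\big)_{x,y}$ is, up to scaling, a Hadamard matrix, hence invertible, so $r=2^n$; for $\DISJ_n$ the matrix $\big(\prod_i(1-x_iy_i)\big)_{x,y}$ is triangular with unit diagonal under the subset order, hence invertible, so again $r=2^n$. For $\mathsf{Index}$, with Alice holding the array, any two distinct arrays differ in some coordinate and are separated by the row for that coordinate, so $r$ is exponential. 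For $\mathsf{BHM}$, with Alice holding $x\in\01^n$, two inputs landing in the same message-fiber must induce the same parity $x_i\oplus x_j$ on every pair $(i,j)$ (else a matching through $(i,j)$ with a suitably chosen string distinguishes them), forcing the fiber to have size at most $2$, so $r\geq 2^{n-1}$ and $\textsf{D}^{\rightarrow}(\mathsf{BHM})\geq n-1$. The only case meriting a couple of lines is $\MAJ_n(x,y)=\big[\sum_i x_iy_i\geq n/2+1\big]$: I would take the $\binom{n}{n/2+1}$ inputs $x$ of Hamming weight exactly $n/2+1$ and observe that for distinct such $x\neq x'$, the choice $y=x$ gives $\langle x,y\rangle=n/2+1$ (entry $1$) while $\langle x',y\rangle=|x'\cap x|\leq n/2$ (entry $0$); hence these rows are pairwise distinct and $\textsf{D}^{\rightarrow}(\MAJ_n)\geq\log\binom{n}{n/2+1}=\Omega(n)$.

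I do not expect any real obstacle here — this is a direct corollary of Fact~\ref{lem:nondeternm}. The two points that require a moment's care are: (i) for the asymmetric functions $\mathsf{Index}$ and $\mathsf{BHM}$ one must apply the fact in the direction in which Alice holds the ``large'' input, which is exactly the Alice$\to$Bob direction appearing in the proof of Fact~\ref{lem:nondeternm} (and, had the convenient direction been the other one, swapping the roles of Alice and Bob in a two-way memoryless protocol is harmless); and (ii) the extra $\log\textsf{D}^{\rightarrow}(F)$ in the denominator only costs an additive $O(\log\log n)$ and therefore does not affect the $\Omega(\log n)$ conclusion.
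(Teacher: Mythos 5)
Your proposal is correct and takes essentially the same route as the paper: the paper likewise derives the corollary directly from Fact~\ref{lem:nondeternm} (and its quantum counterpart), invoking a linear lower bound on $\textsf{D}^{\rightarrow}$ for each function ``by a simple adversarial argument.'' You simply spell out the $\textsf{D}^{\rightarrow}(F)=\Omega(n)$ bounds explicitly via the distinct-rows characterization, which is an equivalent way to run the adversarial argument.
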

This corollary follows immediately from Fact~\ref{lem:nondeternm} because the detereministic-one way communication complexity of these functions are at least $n$ (by a simple adverserial argument), thereby showing that the $(\log n)$-bit protocols we described in the beginning of this section for the first three of these functions is close-to-optimal. However one drawback of Fact~\ref{lem:nondeternm} is it cannot be used to  prove a lower bound that is better than $\log n$ since $\textsf{D}^{\rightarrow}(F)\leq n$ for every function $F:\01^n\times \01^n\rightarrow \01$.
    
    \subsection{Characterization of memoryless communication}\label{subsec:overlay}
    
Papakonstantinou et al.~~\cite{memoryless:comm} consider the memory-nomemory model of communication complexity wherein Alice has a memory and Bob doesn't and they are restricted to one-way communication from Alice to Bob. They show a beautiful combinatorial rectangle-overlay characterization (denoted $\textsf{RO}(F)$) of the  $\M^{\rightarrow}$ model. We briefly define $\textsf{RO}(F)$ below.

	\begin{definition}[Rectangle overlay complexity \cite{memoryless:comm}]
	\label{def:overlay}
		Let $n\geq 1$, $F:\01^n\times \01^n\rightarrow \01$. A \emph{length-$k$ rectangle overlay} for $F$ is a collection $\{(R_1,b_1),\ldots,(R_k,b_k) \}$ satisfying the following:
		
		\begin{itemize}
			\item $R_i\subseteq \01^n\times \01^n$ is a  $b_i$-monochromatic rectangle (i.e., $R_i=X_i\times Y_i$ where $X_i,Y_i\subseteq \01^n$) and $b_i\in \01$. 
			\item $\{R_1.\ldots,R_k\}$ covers $\01^n\times \01^n$.
			\item For every $(x,y)\in \01^n\times \01^n$, suppose $R_\ell$ was the \emph{first} rectangle that contains $(x,y)$, then $F(x,y)=b_\ell$.
		\end{itemize} 
		Then, define $\mathsf{RO}(F)$ is the smallest $k$ for which there exists a length-$k$ rectangle overlay for~$F$.
	\end{definition}
	One of the main results of~\cite{memoryless:comm} was the following characterization.
	
	\begin{theorem}[\cite{memoryless:comm}]
		For every $F:\01^n\times \01^n\rightarrow \01$, we have
		$$
		\log \mathsf{RO}(F)\leq \M^{\rightarrow}(F)\leq 2\log \mathsf{RO}(F).
		$$
	\end{theorem}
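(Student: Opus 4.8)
The statement splits into two inequalities proved by simulations in opposite directions; the lower‑bound direction $\log\mathsf{RO}(F)\le\M^{\rightarrow}(F)$ is where the content lies, and the upper bound is a routine simulation.

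For $\M^{\rightarrow}(F)\le 2\log\mathsf{RO}(F)$, I would start from an optimal overlay $\{(R_i,b_i)\}_{i=1}^{k}$ with $k=\mathsf{RO}(F)$ and $R_i=X_i\times Y_i$, which both players may be assumed to know. The protocol: Alice, holding $x$, runs $i$ from $1$ to $k$ and whenever $x\in X_i$ sends Bob the pair $(i,b_i)$, a message of $\lceil\log k\rceil+1$ bits, and otherwise skips $i$; Bob, on receiving $(i,b_i)$, outputs $b_i$ if $y\in Y_i$ and outputs $\perp$ otherwise. The index $i^{\star}$ at which Bob first halts is exactly the least $i$ with $x\in X_i$ and $y\in Y_i$, i.e.\ the first rectangle of the overlay containing $(x,y)$, so $b_{i^{\star}}=F(x,y)$ by the third axiom of an overlay; and for every $i<i^{\star}$ either Alice skips $i$ (in which case $(x,y)\notin R_i$ anyway) or Bob replies $\perp$, so no premature wrong answer is produced. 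Since $\lceil\log k\rceil+1\le 2\log k$ for $k\ge 2$ (and $F$ is a constant, computable with one bit, when $\mathsf{RO}(F)=1$), this gives the claimed bound. (Sending just $i$ and letting Bob recompute $b_i$ from the known overlay shaves the ``$+1$''.)

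For $\log\mathsf{RO}(F)\le\M^{\rightarrow}(F)$, set $s=\M^{\rightarrow}(F)$ and fix an optimal protocol. First I would put it in normal form: Alice's messages live in $\01^s$, and since Bob is memoryless while Alice has memory, re‑sending a message on which Bob did not halt is pointless (Bob's reply $g_y$ depends only on the message), so on each input $x$ Alice sends a sequence of distinct messages drawn from a global pool $\mathcal{M}\subseteq\01^s$ with $|\mathcal{M}|\le 2^s$. View each $m\in\mathcal{M}$ as a function $\01^n\to\{0,1,\perp\}$ by writing $m(y):=g_y(m)$, with support $\supp(m)=\{y: m(y)\ne\perp\}$. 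From this I would build an overlay with at most two rectangles per message: for $m\in\mathcal{M}$ and $b\in\01$ put $R_{m,b}=X_m\times m^{-1}(b)$ with label $b$, for a suitable set $X_m\subseteq\01^n$ of Alice‑inputs, and list the rectangles in an order consistent with the order in which Alice sends the corresponding messages. The three things to verify are: (i) each $R_{m,b}$ is $b$‑monochromatic; (ii) the $R_{m,b}$ cover $\01^n\times\01^n$; and (iii) the first listed rectangle containing any $(x,y)$ carries label $F(x,y)$. Granting this, $\mathsf{RO}(F)\le 2|\mathcal{M}|\le 2^{s+1}$, and tightening the bookkeeping by a constant recovers $\log\mathsf{RO}(F)\le s$.

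\textbf{Main obstacle.} The delicate point, and the heart of the argument, is the choice of $X_m$ together with properties (ii)--(iii): an overlay is a \emph{single} globally ordered list of rectangles, whereas in a general protocol the order in which Alice sends messages depends on her input $x$, and a message $m$ that fires correctly (with value $F(x,y)$) for one input $x$ may, for a different input $x'$, be scheduled in a position where its value $m(y)$ would be \emph{wrong} — the original protocol gets away with this only because, on input $x'$, some earlier message has already halted Bob on that $y$. So before building the overlay I would normalize the protocol further, roughly to one in which each $m$ is only ever sent by inputs $x$ for which it is \emph{safe}, meaning $m(y)=F(x,y)$ for all $y\in\supp(m)$, and then take $X_m=\{x: m\text{ is safe for }x\}$; with this definition (i) is immediate and (ii)--(iii) follow from safety and the consistency of the ordering. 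Carrying out this normalization \emph{without} blowing the message pool past $2^s$ is the technical crux; by contrast the upper‑bound direction is a straightforward simulation. I would also track the various $+1$'s carefully so as to land exactly on $\log\mathsf{RO}(F)\le\M^{\rightarrow}(F)\le 2\log\mathsf{RO}(F)$ rather than a slightly lossier statement.
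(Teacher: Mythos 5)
The paper itself does not prove this theorem; it is imported verbatim from Papakonstantinou, Scheder, and Song~\cite{memoryless:comm}, so I am evaluating your sketch on its own merits rather than against an in-paper argument. Your upper-bound direction ($\M^{\rightarrow}(F)\le 2\log\mathsf{RO}(F)$) is correct and is the standard simulation of an overlay by a one-way protocol.

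The lower-bound direction has a genuine gap, and it is not the deferred constant-tracking. Your plan hinges on normalizing the protocol so that each message $m$ is sent only by inputs $x$ for which $m$ is ``safe'' ($m(y)=F(x,y)$ for all $y\in\supp(m)$), and then taking $X_m=\{x:m\text{ safe for }x\}$. This fails already on $\EQ_n$ with the natural $\M^{\rightarrow}$ protocol: Alice sends $(1,x_1),\ldots,(n,x_n)$ and then a terminal message $m^{\star}$ for which $g_y(m^{\star})=1$ for \emph{every} $y$ (Bob has survived all coordinate checks). Since $F(x,y)=0$ whenever $y\ne x$, the message $m^{\star}$ is unsafe for every $x$, so your $X_{m^{\star}}=\emptyset$ and $R_{m^{\star},1}=\emptyset$; but $m^{\star}$ is the only message that ever produces output $1$, so your overlay has no nonempty $1$-labelled rectangle and fails to cover the diagonal $\{(x,x)\}$ — your own condition (ii) is violated. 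The point is that correct protocols genuinely rely on unsafe messages being scheduled after a correct halting message, so one cannot normalize them away; the $X$-sets and the global ordering must encode the scheduling itself (e.g.\ indexing rectangles by round as well as by message). That, in turn, threatens to produce $\Theta(2^{2s})$ rectangles rather than $2^s$, and controlling this blow-up is exactly the content of the theorem — not a ``$+1$'' to tighten at the end. You correctly flag the $x$-dependent ordering as the obstacle, but the safety normalization you propose to resolve it is false, and without it the rest of the argument does not go through.
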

       A natural question following their work is, can we even characterize our new general framework of communication complexity wherein \emph{both} Alice and Bob do not have memory and the communication can be two-way. Generalizing the rectangle-based characterization of~\cite{memoryless:comm} to our setting seemed non-trivial because in our communication model the memoryless-ness of the protocol doesn't seem to provide any meaningful way to split the communication matrix into partitions or overlays (as far as we could analyze). 
    Instead we characterize our communication model in terms of \emph{bipartite branching programs}, which we define below.\footnote{For a definition of general branching program ($\BP$), refer to Section~\ref{sec:prelim}.}
 
	\begin{definition}[Bipartite Branching Program ($\BBP$)]
	Let $F:	\01^n\times \01^n\rightarrow \01$. A bipartite branching program is a $\BP$ that computes $F$ in the following way: for every $(x,y)$, each node in the branching program is \emph{either} labelled by a function $f_i \in \mathcal{F}=\{f_i:\01^n\rightarrow \01\}_{i}$ \emph{or} by $g_j \in \mathcal{G}=\{g_j:\01^n\rightarrow \01\}_{j}$; the output edge is labelled by $0$ or $1$ and the output of the function in the node label decides which edge to follow. The size of a $\BBP$ is the number of nodes in it. We define	$\BBP(F)$ as the size of the smallest program that computes $F$ for all $(x,y)\in \01^{2n}$.
	\end{definition}
	
    Observe that in a $\BBP$ every node no longer just queries $x\in \01^n$ at an arbitrary index $i$ (like in the standard $\BP$), but instead is allowed to compute an \emph{arbitrary Boolean function} on~$x$ or~$y$. Of course, another natural generalization of $\BBP$ is, why should the nodes of the program just compute Boolean-valued functions? We now define the \emph{generalized} $\BBP$ wherein each node can have out-degree $k$ (instead of out-degree $2$ in the case of $\BBP$ and $\BP$).

	\begin{definition}[Generalized Bipartite Branching Program $(\GBBP)$]
	Let $k\geq 1$. A generalized bipartite branching program is a $\BBP$ that computes $F$ in the following way: for every $(x,y)$, each node in the branching program can have out-degree $k$ and labelled by the node $f_i \in \mathcal{F}=\{f_i:\01^n\rightarrow [k]\}_{i}$, or by $g_j \in \mathcal{G}=\{g_j:\01^n\rightarrow [k]\}_{j}$;  the output edges are labelled by $\{1,\ldots,k\}$ and the output of the function in the node label decides which edge to follow. The size of a $\GBBP$ is the number of nodes in it. We define	$\GBBP(F)$ as the size of the smallest program that computes $F$ for all $(x,y)\in \01^{2n}$.
		\end{definition}

We now show that the generalized bipartite branching programs are not much more powerful than bipartite branching programs, in fact these complexity measures are quadratically~related.
\begin{fact}
\label{fact:GBBPBBP}
For $F:\01^n\times \01^n\rightarrow \01$, we have $\GBBP(F)\leq \BBP(F)\leq \GBBP(F)^2$.
\end{fact}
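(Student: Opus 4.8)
The plan is to prove the two inequalities $\BBP(F)\leq \GBBP(F)$ and $\BBP(F)\leq \GBBP(F)^2$ separately. The first is essentially immediate: a $\BBP$ is by definition a $\GBBP$ with out-degree $k=2$, so the class of $\BBP$s is a subclass of $\GBBP$s, hence the smallest $\GBBP$ is no larger than the smallest $\BBP$; formally $\GBBP(F)\leq \BBP(F)$. Wait — I need to be careful about the direction. A $\BBP$ is a special case of a $\GBBP$, so every $\BBP$ is a valid $\GBBP$, which gives $\GBBP(F)\leq \BBP(F)$. That is the left inequality in the claim, and it takes one sentence.

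The substantive direction is $\BBP(F)\leq \GBBP(F)^2$, i.e.\ simulating a high-out-degree bipartite branching program by a binary one with at most a quadratic blow-up. First I would fix a $\GBBP$ of size $m=\GBBP(F)$ with out-degree $k$ (where without loss of generality $k\leq m$, since a node with more outgoing edges than there are nodes in the program has repeated targets that can be merged — or more simply, $k$ outcomes route to at most $m$ destinations). Then the standard trick is to replace each node $v$, labelled by a function $h_v:\01^n\to[k]$ computed on Alice's input or Bob's input, by a small binary \emph{decision gadget}: a balanced binary tree (or even a path) of depth $\lceil\log k\rceil$ whose internal nodes are each labelled by a Boolean function of the \emph{same} player's input — namely the $j$-th bit of the value $h_v(\cdot)\in[k]$ written in binary — and whose $k$ leaves are wired to the $\BBP$-nodes that the original $k$ edges of $v$ pointed to. Crucially, each node of this gadget queries a Boolean function of $x$ (or of $y$), which is exactly what a $\BBP$ node is allowed to do, so bipartiteness is preserved. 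Each gadget has at most $k\leq m$ nodes, there are $m$ nodes total, so the resulting $\BBP$ has at most $m\cdot k\leq m^2=\GBBP(F)^2$ nodes, and by construction it follows the same computation path and produces the same output on every $(x,y)$.

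A couple of routine points need to be checked to make this airtight. One must confirm that the source node and the two sink nodes $0,1$ behave correctly under the substitution (sinks have out-degree $0$ and need no gadget; the source is replaced by the gadget for the original source node, with the new source being the root of that gadget). One should also note the graph stays acyclic since we only ever refine a single node into a DAG-gadget whose leaves point to the \emph{same} targets as before. The main (very mild) obstacle is just making the out-degree bound $k\leq m$ clean: if the $\GBBP$ definition allows $k$ larger than the number of nodes, I would either add the observation above or simply state that we may assume $k\leq\GBBP(F)$ without loss of generality because edges pointing to a common node can be collapsed and unused edge-labels discarded, so the gadget size is at most $\GBBP(F)$. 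Beyond that bookkeeping, there is nothing deep here — the claim is exactly the familiar ``encode a $k$-ary branch as $\log k$ binary branches'' argument adapted to the bipartite setting, and the proof is short.
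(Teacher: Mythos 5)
Your proof is correct and matches the paper's argument: replace each out-degree-$k$ node of the $\GBBP$ by a depth-$\lceil\log k\rceil$ binary tree of Boolean-labelled nodes querying the bits of the node's $[k]$-valued function, preserving bipartiteness and blowing up the size by at most a factor of $k\le\GBBP(F)$. You also supply a justification (collapsing edges to a common target) for the implicit assumption $k\le\GBBP(F)$ that the paper uses without comment when concluding $s\cdot k\le s^2$; that is a harmless and welcome addition, not a deviation.
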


\begin{proof}
The first inequality is obvious as $\GBBP$s are generalized version of $\BBP$s and thus can simulate $\BBP$s. Let $\GBBP(F)=s$. In order to see the second inequality, we show that every node in the $\GBBP$ can be computed using a $\BBP$ with at most $s$ nodes: observe that if a node in $\GBBP$ has $k$ outputs then we can express this node using a binary tree of depth $\log k$ and size $k$ such that each node in this binary tree is indexed by a Boolean function. Hence we can replace every node in the $\GBBP$ using the argument above, and we get $\BBP(F)\leq s\cdot k\leq s^2$.   
\end{proof}

It is not clear if the quadratic factor loss in the simulation above is necessary and we leave it as an open question. We are now ready to prove our main theorem relating $\NM$ communication model and bipartite branching programs.

	\begin{theorem}
	\label{thm:d_nomem_logT}
		For every $F:\01^{n\times n}\rightarrow \01$, we have
		$		\frac{1}{2}\log \BBP(F)\leq \NM(F)\leq \log \BBP(F)$.
	\end{theorem}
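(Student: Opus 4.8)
The plan is to prove the two inequalities by direct simulations between $\NM$ protocols and bipartite branching programs. For the upper bound $\NM(F) \le \log \BBP(F)$, suppose $\BBP(F) = s$ and fix an optimal bipartite branching program $P$ with $s$ nodes (including the source and the two sinks). The idea is that the ``state'' of a computation in $P$ is simply the name of the current node, which lives in a set of size $s$ and hence can be encoded in $\lceil \log s\rceil$ bits. The message register in the $\NM$ protocol will carry exactly this node-name. When the current node is labelled by some $f_i \in \mathcal{F}$ (a function of $x$ alone), Alice can compute $f_i(x)$ and therefore knows which outgoing edge to follow, so she updates the register to the name of the next node; when it is labelled by some $g_j \in \mathcal{G}$ (a function of $y$), Bob does the analogous update. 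The subtlety is that in an $\NM$ protocol the players strictly alternate (Alice, then Bob, then Alice, \ldots) and neither knows whose ``turn'' it truly is relative to $P$; but this is handled by the standard convention that on receiving a node-name labelled by a function of the \emph{other} player's input, a player simply passes the register back unchanged. Thus on any input the register traces out the path of $P$, possibly with some ``wait'' steps interleaved, and terminates once it reaches a sink; at that point the player who sees the sink-name knows $F(x,y)$ and outputs it (formally by sending $1^{t-1}b$ as in Definition~\ref{definition:twowaymemoryless}). One has to check that $P$ being acyclic guarantees termination and that the number of register bits needed is $\lceil \log s\rceil$ once we fold the sink labels into the encoding; this gives $\NM(F) \le \log \BBP(F)$.

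For the lower bound $\tfrac12\log \BBP(F) \le \NM(F)$, i.e.\ $\BBP(F) \le 2^{2\,\NM(F)}$, the plan is to turn an $\NM$ protocol of cost $t$ into a $\BBP$ with at most $2^{2t}$ nodes. Fix $F$ and an optimal $t$-bit $\NM$ protocol given by $\{f_x\}$, $\{g_y\}$ with $f_x, g_y : \01^t \to \01^t$. The natural attempt is: the nodes of the $\BBP$ are the $2^t$ possible messages, with Alice-nodes applying the appropriate bit of $f_x$ and Bob-nodes applying the appropriate bit of $g_y$. The issue is that $f_x$ outputs a whole $t$-bit string, not a single bit, so one node cannot directly have out-degree $2^t$ of the right form in a $\BBP$; this is exactly where the factor $2$ (equivalently the square) appears. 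I would first build a $\GBBP$: create $2^t$ nodes, one per message $m \in \01^t$; at a node $m$ that is ``Alice's to update'' label it with the function $x \mapsto f_x(m) \in [2^t]$ and route edge $m'$ to node $m'$ (and similarly for Bob), declaring a message of the form $1^{t-1}b$ to route to the $b$-sink. Correctness follows because the walk through this $\GBBP$ exactly mirrors the message register of the protocol. Then apply Fact~\ref{fact:GBBPBBP} (or rather, inline its argument): replace each degree-$2^t$ node by a binary decision tree of depth $t$ on the relevant Boolean function, costing a factor $2^t$, to get $\BBP(F) \le 2^t \cdot 2^t = 2^{2t}$, hence $\tfrac12 \log \BBP(F) \le t = \NM(F)$.

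The main obstacle I anticipate is not any single hard step but getting the bookkeeping of the alternation convention exactly right in both directions: one must be careful that in the $\NM \to \BBP$ direction a node $m$ is unambiguously either ``Alice's turn'' or ``Bob's turn'' (which follows from the strict alternation and the fixed first message $f_x(0^t)$, so parity of the round is determined — but since the players have no clock, the cleanest formalization encodes the turn implicitly via which of the two players' update rules is applied, so one should argue the induced walk is well-defined and deterministic once $x,y$ are fixed), and that in the $\BBP \to \NM$ direction the ``pass unchanged'' convention does not create a cycle in the register dynamics — acyclicity of $P$ plus the fact that each genuine step strictly advances along $P$ handles this. A secondary point requiring care is termination and the exact treatment of the terminal messages $1^{t-1}b$ relative to the node count, so that the stated constants ($\frac12$ and $1$, with no additive slack) come out clean; I would absorb the sinks into the $2^t$ nodes rather than adding two extra nodes, which keeps the bound at $2^{2t}$ exactly.
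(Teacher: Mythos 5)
Your plan follows essentially the same route as the paper: prove the upper bound by having the register carry the current node name of the branching program, and prove the lower bound by first building a generalized bipartite branching program ($\GBBP$) from the $\NM$ protocol and then binarizing the high-out-degree nodes via Fact~\ref{fact:GBBPBBP}. The upper-bound direction is fine; your ``pass unchanged when it's not your side's node'' convention is a clean way to handle a $\BBP$ whose nodes do not strictly alternate between $\mathcal{F}$- and $\mathcal{G}$-labels.

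However, the lower-bound direction has a genuine gap, precisely at the point you flag and then dismiss. You build a $\GBBP$ with $2^t$ nodes, one per message string $m\in\01^t$, and assign each $m$ a single owner (``Alice's to update'' or ``Bob's to update''), justified by the claim that ``parity of the round is determined'' by the fixed first message $f_x(0^t)$. That reasoning only pins down the parity of the $i$-th message within a single run on a fixed pair $(x,y)$; it does not make the owner of a given \emph{string} $m$ well-defined across all inputs. Nothing in the protocol prevents the same string $m$ from being an Alice-to-Bob message on one input pair (so Bob must update it with a function of $y$) and a Bob-to-Alice message on another (so Alice must update it with a function of $x$). For such an $m$, no single $\GBBP$ node label — which by definition must be a function of $x$ alone or of $y$ alone — can be correct on all inputs, so the $2^t$-node $\GBBP$ does not compute $F$. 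The fix, which is exactly what the paper does, is to keep two copies of each message string, $m^A$ (just sent by Alice) and $m^B$ (just sent by Bob), giving a bipartite graph on $2\cdot 2^t$ nodes with edges only between the two sides; now each side has an unambiguous owner. This doubling contributes only an additive constant after taking logarithms (the paper itself is cavalier about this slack), but without it the construction you describe is incorrect as stated. Once you make that change, the rest of your argument — replace each out-degree-$2^t$ node by a depth-$t$ binary tree and fold the two sinks into the encoding — goes through.
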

	\begin{proof}
    We in fact prove something stronger here, i.e., $\NM(F)=\log \GBBP(F)$ for all $F$. Using Fact~\ref{fact:GBBPBBP} we get the theorem statement.
    
    We first prove $\log \GBBP(F)\leq \NM(F)$. Let $\NM(F)=s$. Given an $s$-bit $\NM$ protocol that computes~$F$, one can label all possible messages from Alice to Bob by the set $\Mset_A=\{m^A_1, m^A_2, \ldots, m^A_{2^s}\}$ and similarly all the messages from Bob to Alice by $\Mset_B=\{m^B_1, m^B_2, \cdots m^B_{2^s}\}$.\footnote{Technically the number of messages exchanged between them should be a parameter $T$, but for notational simplicity we assume they communication for $2^s$ rounds, where we used $T\leq 2^s$ by the argument in the first paragraph of the proof of Fact~\ref{lem:nondeternm}.} Also let us suppose in the $\NM$ protocol, Alice and Bob have functions $\{f_x\}_{x\in \01^n}$ and $\{g_y\}_{y\in \01^n}$ respectively. Now we construct a generalized bipartite branching program that contains $2^{s+1}$ nodes, each node labelled by one of the messages $\Mset_A\cup \Mset_B$ which were exchanged between Alice and Bob in the~$\NM$ protocol. It remains to establish the edge-connections between the respective nodes as well as associate each node with a functions $\{f'_i\}_{i=1}^{2^s},\{g'_j\}_{j=1}^{2^s}$ each mapping $\01^s$ to  $\01^s$.  
	For any two pair of nodes $(m^A,m^B)\in \Mset_A\times \Mset_B$ we put an edge from $m^A$ to $m^B$ \emph{if and only if} there exists some $y\in \01^n$ such that $g_y(m^A)=m^B$ in the $\NM$ protocol computing $F$. Similarly for any two pair of nodes $(m^B,m^A)\in \Mset_B\times \Mset_A$ we put an edge from $m^B$ to $m^A$ \emph{if and only if} there exists some $x\in \01^n$ such that $f_x(m^B)=m^A$ in the $\NM(F)$ protocol. We now associate each node with a function $f'_i,g'_i$.  On every node $m^B_j \in \Mset_B$, we associate the function $f_j': \01^n \rightarrow \01^s$ defined as $f_j'(x)\triangleq f_x(m^B_j)$. Similarly on every node $m^A_i \in \Mset_A$, we associate a function $g_i': \01^n \rightarrow \01^s$ defined as $g_i'(y)\triangleq g_y(m^A_i)$. These functions dictate how the branching program routes between every two nodes. Finally, we make $m^A_1$ as the source node and we glue the nodes $m^A_{1^{s-1}b}$ and $m^B_{1^{s-1}b}$ together and designate it as a sink node $b$. This completes the construction of the branching program. Note that all the edges are either from $\Mset_A$ nodes to $\Mset_B$ nodes or vice versa {but not both}. It now follows from the construction that for every $(x,y)$ the sequence of message (from $\Mset_A \cup \Mset_B$) exchanged between Alice and Bob is exactly the sequence of nodes traversed in the branching program when the functions $f'_i,g'_j$ are evaluated on the inputs $x,y$. By the promise of the $\NM$ protocol, we have that the branching program computes~$F(x,y)$, hence $\GBBP(F)\leq 2^s$.

    We now show the other direction, $\log\GBBP(F)\geq \NM(F)$.  Consider an arbitrary generalized bipartite branching program of size $s$. Note that one can view a $\GBBP$ computing $F$ as a bipartite graph. Let us assume that $s= k + \ell$, where the bi-partition sizes are $k$ and $\ell$. We now construct an~$\NM$ protocol of cost at most $\log s$. Suppose the nodes are labelled by $[k] \cup [\ell]$. We now let $a_i$ (resp.~$b_j$) be the binary number representations of the node label $[k]$ (resp.~$[\ell]$). By the definition of a $\GBBP$, there are no edges within the nodes labelled by $[k]$ and within nodes labelled by $[\ell]$ and we only have edges going between $[k]$ and $[\ell]$. 
    Moreover, by definition every node labelled by $[k]$ (resp.~$[\ell]$) has a function $f_i$ (resp.~$g_j$) associated with it, where $f_i,g_j:\01^n \rightarrow \01^s$. We now construct our $\NM$ protocol as follows:  we define two sets of function $\{f_x'\}_{x\in \01^n}$ and $\{g_y'\}_{y\in \01^n}$ where $f_x',g_y': \01^{\log s} \rightarrow \01^{\log s}$ and these will serve as the functions for the $\NM$ protocol. We define Alice's functions $\{f_x'\}_x$ in the following way: for every $x\in \01^n, a_i \in \01^{\log k}, b_j \in \01^{\log \ell}$ if on the node $i\in [k]$, $f_i(x)=b_j$ then we define $f_{x}'(b_i) = a_j$. Similarly we define Bob's functions $\{g_y'\}_y$ as follows: for every $y\in \01^n, a_i \in \01^{\log k}, b_j \in \01^{\log \ell}$ if on the node $j\in [\ell]$, $g_j(y)=a_i$ then we define $g_{y}'(a_j) = b_i$.  From the construction of the $\GBBP$, it is not too hard to see that the number of different messages used in the $\NM$ protocol is at most  the number of nodes in the bi-partitions. Hence, each message size at most $\log s$.
\end{proof}

	Earlier we saw that $\GBBP$ is polynomially related to $\BBP$. We now observe that both these measures can be exponentially smaller than standard branching program size.\footnote{The function we use here is the standard function that separates bipartite formula size from formula size.}
	
\begin{fact}
The parity function $\parity_n(x,y)=\sum_{i}x_i \oplus y_i \pmod{2}$ gives an exponential separation between generalized bipartite branching programs and branching programs.
\end{fact}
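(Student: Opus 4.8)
The plan is to prove the two directions of the separation separately. For the upper bound, I would use the identity $\parity_n(x,y)=\big(\bigoplus_{i=1}^{n}x_i\big)\oplus\big(\bigoplus_{i=1}^{n}y_i\big)$, so that $\parity_n$ depends on Alice's input only through the single bit $a:=\bigoplus_{i}x_i$ and on Bob's input only through the single bit $b:=\bigoplus_{i}y_i$. This gives a bipartite branching program with a constant number of nodes: let the source be labelled by the Boolean function $x\mapsto a$; route its $a=0$ edge to a node labelled by $y\mapsto b$ whose two outgoing edges go to the $0$- and $1$-sink respectively, and route its $a=1$ edge to a second node labelled by $y\mapsto b$ whose two outgoing edges go to the $1$- and $0$-sink respectively. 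This is a legitimate $\BBP$ of size $O(1)$, hence $\BBP(\parity_n)=O(1)$ and so $\GBBP(\parity_n)=O(1)$ by Fact~\ref{fact:GBBPBBP}. (Equivalently: the one-round memoryless protocol in which Alice sends $a$ and Bob replies with $a\oplus b$ shows $\NM(\parity_n)=O(1)$, and Theorem~\ref{thm:d_nomem_logT} then gives $\GBBP(\parity_n)=2^{\NM(\parity_n)}=O(1)$.)

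For the lower bound, I would show that any ordinary branching program computing $\parity_n$, viewed as a Boolean function of the $2n$ variables $x_1,\dots,x_n,y_1,\dots,y_n$, has $\Omega(n)$ nodes. The key point is that $\parity_n$ is fully sensitive: flipping any single input bit flips its value, so $\parity_n$ is non-constant on every subcube that leaves at least one variable free. Hence in any branching program $P$ computing $\parity_n$, every source-to-sink path $\pi$ must query all $2n$ variables: if the set $S$ of variables queried along $\pi$ were a proper subset, then every input consistent with the partial assignment read along $\pi$ would traverse $\pi$ and be assigned the same output by $P$, contradicting non-constancy of $\parity_n$ on that subcube. Therefore every source-to-sink path of $P$ has at least $2n+1$ nodes, so $P$ has at least $2n+1$ nodes and $\BP(\parity_n)\ge 2n+1$. (Alternatively, invoke Ne\v{c}iporuk's lower bound~\cite{nechiporuk:cc}, or the standard fact that $m$-bit parity has minimal branching program size $\Theta(m)$; see Wegener~\cite{wegener1987complexity}.)

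Combining the two bounds, $\GBBP(\parity_n)=\BBP(\parity_n)=O(1)$ while $\BP(\parity_n)=\Omega(n)$; equivalently, in terms of the complexity measures $\NM(F)=\log\GBBP(F)$ and $\log\BP(F)$, we get $\NM(\parity_n)=O(1)$ versus $\log\BP(\parity_n)=\Omega(\log n)$, the asserted separation. I do not expect either half to pose a genuine obstacle; the only points that need a little care are checking that the constant-size program above conforms literally to the definition of a (generalized) bipartite branching program — every internal node labelled by a function of $x$ alone or of $y$ alone, out-degree two, and distinguished $0$- and $1$-sinks — and phrasing the path argument in the lower bound so that it bounds the number of \emph{nodes} of $P$ rather than merely the length of one path.
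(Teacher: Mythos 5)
Your proof is correct, and it takes a mildly different route on the lower bound than the paper does. For the upper bound you use exactly the paper's decomposition $\parity_n(x,y)=\big(\bigoplus_i x_i\big)\oplus\big(\bigoplus_i y_i\big)$; you just spell out the $O(1)$-node $\BBP$ explicitly (or equivalently via $\NM(\parity_n)=O(1)$ and Theorem~\ref{thm:d_nomem_logT}), whereas the paper goes directly through the $\NM$ characterization. For the lower bound, the paper invokes Ne\v{c}iporuk's method, noting that fixing any one bit yields two distinct subfunctions ($\parity_{n-1}$ and its complement). You instead give a more elementary path argument from full sensitivity: every input-consistent source-to-sink path must read all $2n$ variables, and since a path in a DAG visits distinct nodes this forces $\BP(\parity_n)\ge 2n+1$. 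Your argument is self-contained and arguably cleaner than appealing to Ne\v{c}iporuk for such a simple function; the only small imprecision is the phrase ``every source-to-sink path $\pi$ must query all $2n$ variables,'' which should be restricted to paths actually traversed by some input (an inconsistent path that queries the same variable twice with conflicting edge labels need not read every variable), but since at least one such consistent path always exists — e.g.\ the one followed by $0^{2n}$ — the conclusion $\BP(\parity_n)=\Omega(n)$ is unaffected. Both approaches yield the claimed $O(1)$ vs.\ $\Omega(n)$ separation.
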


\begin{proof}
In order to see that bipartite branching  program can compute $\parity_n$ efficiently, we use the characterization in Theorem~\ref{thm:d_nomem_logT}: in the memoryless communication model, Alice can simply compute $b_1=\sum_i x_i \pmod{2}$ and send it to Bob who computes $b_2=\sum_i y_i \pmod{2}$ and outputs $b_1\oplus b_2$. However, using Ne\v{c}hiporuk~\cite{nechiporuk:cc} it is possible to show that an arbitrary branching program computing $\parity_n$ is at least $n$. The idea of the proof is that setting any variable $x_i$ to a constant $b$, would give us two different functions on the rest of the bits depending on $b$: either $\parity_{n-1}$ or $1\oplus \parity_{n-1}$. Thus fixing each bit gives us two different sub-functions and using Ne\v{c}hiporuk we obtain a lower bound of $n$ on the branching program size.
\end{proof}

\paragraph{Time Space Trade-off for Memoryless.}  Finally, we mention a connection between our communication model and time-space trade-offs. In particular, what are the functions that can be computed if we limit the number of rounds in the memoryless protocol? Earlier we saw that, an arbitrary memoryless protocol of cost $s$ for computing a function $F$ could consist of at most $2^{s+1}$ rounds of message exchanges. If sending one message takes one unit of time, we can ask whether it is possible to simultaneously reduce the message size $s$ and the time $t$ required to compute a function. The fact below gives a time-space trade-off in terms of deterministic communication~complexity.

\begin{fact}
	    For every $k\geq 1$ and  function $F:
	    \01^n\times \01^n\rightarrow \01$, we have $\NM_k(F) \geq \textsf{D}(F)/k$, where $\NM_k(F)$ is the $\NM$ communication complexity of computing $F$ with at most $k$ rounds of communication, and $\textsf{D}(F)$ is the standard deterministic communication complexity.
\end{fact}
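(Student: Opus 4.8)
The plan is to exhibit a direct simulation: given an $\NM$ protocol for $F$ that uses messages of size $s = \NM_k(F)$ and runs for at most $k$ rounds, I would compile it into a \emph{standard} (two-way) communication protocol whose total cost is at most $k \cdot s$, and then invoke the definition of $\textsf{D}(F)$ as a lower bound on that cost. Rearranging $k \cdot \NM_k(F) \geq \textsf{D}(F)$ gives exactly the claimed bound $\NM_k(F) \geq \textsf{D}(F)/k$.

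The simulation itself is essentially trivial because a standard protocol is strictly more permissive than a memoryless one: Alice and Bob simply run the given $\NM$ protocol verbatim, except that now each party is allowed to remember the transcript (which they in fact will not need). Concretely, on inputs $x,y$ Alice computes her memoryless function $f_x$, Bob computes $g_y$, and they exchange exactly the same sequence of messages $m^A_1, m^B_1, m^A_2, m^B_2, \ldots$ that the $\NM$ protocol would produce, each message being an element of $\01^s$. After at most $k$ rounds the $\NM$ protocol halts (by hypothesis the round-bound is $k$), at which point the transcript determines $F(x,y)$ exactly as in the memoryless model; since a standard protocol only needs \emph{one} of the parties to output the value and both know the final message $1^{s-1}F(x,y)$, correctness is immediate. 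The total number of bits exchanged is at most $2k$ messages of length $s$ in the worst case — or, being slightly more careful, the $k$-round bound should be read so that the relevant count is $k$ messages, matching the statement; in either reading the cost is $O(k s)$, and to get the clean constant-free inequality $\textsf{D}(F) \leq k \cdot \NM_k(F)$ one takes $k$ to count the total messages (Alice-to-Bob plus Bob-to-Alice) exchanged.

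There is no real obstacle here — the only point requiring a line of care is the bookkeeping on what ``$k$ rounds'' means (a round as an Alice$\to$Bob$\to$Alice exchange versus a single message), and ensuring the termination convention of Definition~\ref{definition:twowaymemoryless} (the transcript reaching $1^{s-1}b$) transfers cleanly so that the standard protocol knows when to stop and output. Once that convention is fixed, the inequality $\textsf{D}(F) \leq k\cdot\NM_k(F)$ follows because the memoryless protocol, viewed as a standard protocol, computes $F$ with total communication at most $k\cdot\NM_k(F)$, and $\textsf{D}(F)$ is by definition the minimum such cost over all standard protocols; dividing through by $k$ completes the proof.
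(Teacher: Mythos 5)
Your argument is exactly the paper's: run the $\NM_k$ protocol verbatim as a standard deterministic protocol, so that $\textsf{D}(F)\leq k\cdot\NM_k(F)$, and rearrange. Your careful note that ``$k$ rounds'' should be read as $k$ messages total (one per round, alternating Alice and Bob) is consistent with the paper's phrasing ``at each of the $k$ rounds, Alice or Bob send an $s$-bit message,'' so the clean inequality holds with no factor of $2$.
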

\begin{proof}Observe that an arbitrary deterministic communication protocol for computing any function $F$ can be obtained by simulating a memoryless protocol of cost $\NM_k(F)=s$: at each of the $k$ rounds, Alice or Bob send an $s$-bit message to one another. Hence $\textsf{D}(F)\leq \NM_k(F)\cdot k$.
\end{proof}

It is not hard to now see that the number of rounds in an $\NM(F)$ protocol corresponds to the \emph{depth} of the generalized bipartite branching program computing $F$. So an immediate corollary of the fact above is, even for simple functions such as equality, inner product, if we restrict the depth of $\GBBP$ to be $o(n)$, then we can show \emph{exponential-size} lower bounds on such $\GBBP$s computing these functions. Similarly note that one can separate $\QNM$ and $\NM$ model of communication if we bound the number of rounds: consider the problem where Alice and Bob get $x,y\in \01^n$ respectively promised that, $x=y$ or Hamming distance between $x,y$ is $n/2$. In this case, clearly $\NM_k(F)\geq n/k$ (from the fact above), which in particular means that constant-round $\NM$ protocols need to send $\Omega(n)$ bits. In contrast, in the $\QNM$ model, Alice could simply send $O(1)$ copies of a fingerprint state $\ket{\psi_x}=\frac{1}{\sqrt{n}}\sum_i (-1)^{x_i}\ket{i}$ (in a \emph{single round}) and due to the promise, Bob can perform swap test between $\ket{\psi_x}, \ket{\psi_y}$ and decide if $x=y$ or the Hamming distance is $n/2$ with probability $1$.

\section{Separations between memoryless communication models}
\label{sec:understandmem}

	In this section, we show that there exists exponential separations between the four memoryless communication models defined in Section~\ref{sec:memcommmode} (and in particular, Fact~\ref{claim:firstinequality}).
	\begin{theorem}
	\label{thm:exponentiallyweakinequality}
		There exists functions $F$ for which the following inequalities (as shown in Fact~\ref{claim:firstinequality}) is exponentially weak\footnote{We remark that the functions exhibiting these exponential separations are different for the three inequalities.}
		$$
		\M(F)\leq 	\NM(F)  \leq 2\M^{\rightarrow}(F)\leq	2\NM^{\rightarrow}(F).
		$$
	\end{theorem}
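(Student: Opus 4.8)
The plan is to exhibit three separate functions, one for each of the three inequalities $\M(F)\leq \NM(F)$, $\NM(F)\leq 2\M^{\rightarrow}(F)$, and $\M^{\rightarrow}(F)\leq \NM^{\rightarrow}(F)$, each witnessing an exponential gap. For the separations I would lean on the characterizations and facts already established: $\NM^{\rightarrow}(F)=\textsf{D}^{\rightarrow}(F)$ (Fact~\ref{fact:nm-oneway_equal_det_one_way}), $\M(F)\leq\log n$ for every $F$ (Fact~\ref{fact:MQMissmall}), $\log\mathsf{RO}(F)\leq\M^{\rightarrow}(F)\leq 2\log\mathsf{RO}(F)$, and $\NM(F)=\log\GBBP(F)$ (from Theorem~\ref{thm:d_nomem_logT}). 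The strategy in each case is to pick a ``natural'' candidate whose upper bound in the smaller model is immediate and whose lower bound in the larger model follows from a standard hardness fact (one-way deterministic communication complexity or rectangle-overlay size).

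First, for $\M(F)\ll\NM(F)$: since $\M(F)\leq\log n$ always, it suffices to find $F$ with $\NM(F)=\Omega(n)$, and Lemma~\ref{lemma:randomfhasighR} already hands us $\NM(F)=\Omega(n)$ for a random $F$; to get an \emph{explicit} witness one can instead take a function whose $\GBBP$ (equivalently branching-program-style) complexity is known to be $2^{\Omega(n)}$, e.g.\ a suitable algebraic or code-based function, and quote $\NM(F)=\log\GBBP(F)=\Omega(n)$. Second, for $\NM(F)\ll\M^{\rightarrow}(F)$: here I want a function that is cheap for two-way memoryless protocols but expensive in the one-way memory/no-memory (overlay) model. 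The indirect-storage-access function $\mathsf{ISA}_n$ is the natural candidate — the paper already observes $\NM(\mathsf{ISA})\leq\log n$ via the obvious two-round protocol (Alice sends $\vec z$, Bob returns $\vec y_{\Int(z)}$), while an overlay / one-way memory lower bound should give $\M^{\rightarrow}(\mathsf{ISA})=\Omega(n/\log n)$ or better by a rectangle-overlay counting argument (analogous to the branching-program lower bound of~\cite{wegener1987complexity}); alternatively a distributed pointer-chasing-type function with many rounds forced in the one-way setting works. Third, for $\M^{\rightarrow}(F)\ll\NM^{\rightarrow}(F)=\textsf{D}^{\rightarrow}(F)$: take the indexing function $\mathsf{Index}$ (Alice holds a string, Bob a pointer), which has $\textsf{D}^{\rightarrow}=\Omega(n)$ in the direction where Alice speaks, hence $\NM^{\rightarrow}(\mathsf{Index})=\Omega(n)$, but which has a constant-size rectangle overlay (Alice reveals her bits one index at a time, narrowing down to a $1$-rectangle), so $\M^{\rightarrow}(\mathsf{Index})=O(1)$; alternatively the greater-than function serves the same role.

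I would present the proof as three short paragraphs, each of the form ``upper bound in small model $+$ lower bound in large model.'' The upper bounds are all essentially one-line protocol descriptions and need no real work. The main obstacle is the lower bound side of the middle separation, $\M^{\rightarrow}(\mathsf{ISA})$ large: unlike the one-way deterministic or overlay lower bounds for $\mathsf{Index}$ and random functions, bounding rectangle-overlay size from below for $\mathsf{ISA}$ requires genuinely arguing that no small family of overlaid monochromatic rectangles can handle the nested-indirection structure. I expect this to reduce to showing $\mathsf{RO}(\mathsf{ISA})=2^{\Omega(n/\log n)}$ by a fooling-set-style argument adapted to overlays (counting how many distinct ``first-rectangle'' behaviors are needed), mirroring the branching-program lower bound for $\mathsf{ISA}$; getting the exponent right is the delicate step. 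If that turns out to be technically heavy, a fallback is to replace $\mathsf{ISA}$ by a cleaner composed function $\mathsf{Index}\circ g$ tailored so that its overlay complexity lower bound follows directly from the overlay characterization plus a known one-way lower bound, at the cost of a slightly less ``natural'' witness.
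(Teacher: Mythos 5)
Your overall decomposition into three witness functions — one per inequality, upper bound in the smaller model plus lower bound in the larger — matches the paper, and your first and third separations are essentially the paper's. For $\M\ll\NM$ you use a random function via Lemma~\ref{lemma:randomfhasighR} and Fact~\ref{fact:MQMissmall}, exactly as the paper does. For $\M^{\rightarrow}\ll\NM^{\rightarrow}$ you propose $\mathsf{Index}$ where the paper uses $\DISJ_n$; both work (with $\NM^{\rightarrow}=\textsf{D}^{\rightarrow}=\Omega(n)$ and a $\log n$-type $\M^{\rightarrow}$ protocol where Alice enumerates). Your claim that $\M^{\rightarrow}(\mathsf{Index})=O(1)$ is off — the rectangle overlay for $\mathsf{Index}$ has size $\Theta(n)$, not $O(1)$, giving $\M^{\rightarrow}(\mathsf{Index})=\Theta(\log n)$ — but that is still an exponential gap, so the witness survives.

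The middle separation $\NM\ll\M^{\rightarrow}$ is where your proposal has a genuine gap. You pick $\mathsf{ISA}_n$ and hope for $\M^{\rightarrow}(\mathsf{ISA})=\Omega(n/\log n)$ by an overlay fooling-set argument, but this is false: $\mathsf{ISA}$ is in fact \emph{easy} in the $\M^{\rightarrow}$ model. Since Alice holds both $\vec{x}$ and $\vec{z}$, she can simply enumerate $j=1,\ldots,m$ and in round $j$ send the triple $(\vec{z},j,x_j)$ — this is $k+\log m+1=O(\log n)$ bits. Bob (memoryless, but seeing $\vec{z}$ fresh in every message) computes $b=\Int(\vec{y}_{\Int(z)})$, checks whether $j=b$, and outputs $x_j$ if so, else waits. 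Hence $\M^{\rightarrow}(\mathsf{ISA})=O(\log n)$, and $\mathsf{ISA}$ cannot witness the separation. Intuitively $\mathsf{ISA}$ is hard for branching programs because the index $b$ is not local information on either side, but once one side holds $\vec{x}$ \emph{and} $\vec{z}$ and has memory, the difficulty evaporates. The paper instead uses the inner-product function $\IP_n$: it has $\NM(\IP_n)\leq\log n+2$ by the round-robin parity-accumulation protocol, while Papakonstantinou et al.~\cite{memoryless:comm} prove $\M^{\rightarrow}(\IP_n)\geq n/4$ via their rectangle-overlay characterization. That cited lower bound is exactly the nontrivial ingredient you flagged as the delicate step; the fix is simply to use $\IP_n$ and invoke the existing overlay lower bound rather than attempting a new one for $\mathsf{ISA}$.
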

	
	\begin{proof}
		The third inequality is exponentially weak for the Disjointness function $\DISJ_{n}$ defined as: for every $x,y\in \01^n$,  $\DISJ(x,y)=0$ if and only if there exists $i\in [n]$ such that $x_i=y_i=1$. In this model, we have $\NM^{\rightarrow}(F)\geq \Omega(n)$. In order to see this, we saw in Fact~\ref{fact:nm-oneway_equal_det_one_way} that $\NM^{\rightarrow}(F)=D^{\rightarrow}(F)$ which is just the standard one-way communication model. It is well known~\cite{ccbook} that  $D^{\rightarrow}(\DISJ_n)= \Omega(n)$. However, when Alice has a memory they can perform the following protocol: Alice sends an index with the value $(i,x_i)$ which takes $\log n+1$ bits and Bob gets a symbol in $\{0,\perp\}$:~$0$ if $x_i=y_i=1$ and $\perp$ if $x_i\neq y_i$. Bob outputs 0 if he gets 0, otherwise for $\perp$ Bob continues. Note that Alice doesn't know Bob's output since it's a one way $M^{\rightarrow}(F)$ protocol. Regardless of Bob's output, Alice repeats the protocol above for $n$ different $(i,x_i)$ (this is where we use the fact that Alice has a memory, hence she doesn't repeat sending the same $i$ twice). At the $n$th round after Alice sends $(n,x_n)$, she sends an one-bit `output' all-$1$ string. Note that if Bob didn't output~$0$ in the first $n$ rounds, he will output $1$ once he gets the all-$1$ message. 
		
		The second inequality is exponentially weak for the inner-product function $\IP_{n}$, defined as  
		$\IP(x,y)=\sum_{i} x_i\oplus y_i\pmod{2}$ for $x,y\in \01^n$. Papakonstantinou et al.~\cite{memoryless:comm}  showed that $\M^{\rightarrow}(\IP_n)\geq n/4$. In the memoryless communication model, we showed $\NM(\IP_n)\leq \log n+1$ at the start of this section.
	
The first inequality is weak for a random function. Let $F:\01^n\times \01^n\rightarrow \01$ be a random function, then we showed in Lemma~\ref{lemma:randomfhasighR} that $\NM(F)\geq \Omega(n)$.	Also  $\M(F)\leq \log n+1$ follows immediately from Fact~\ref{fact:MQMissmall}.
	\end{proof}
	
	We now exhibit exponential separations between the quantum and classical memoryless models of communication complexity.
	
	\begin{theorem}
	\label{thm:qcseparations}
		There exist functions $F:D\rightarrow \01$ where $D\subseteq \01^n\times \01^n$ for which the following inequalities are exponentially weak: (i) $\QNM^{\rightarrow}(F)\leq \NM^{\rightarrow}(F)$, (ii) $\QM^{\rightarrow}(F)\leq \M^{\rightarrow}(F)$, (iii) $\QM(F)\leq \M(F)$.\footnote{Again, the functions exhibiting these separations are different for the three inequalities.}
	\end{theorem}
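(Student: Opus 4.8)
The plan is to exhibit, for each of the three inequalities, a (partial) function whose quantum memoryless complexity is $O(\log\log n)$ or $O(1)$ while its classical counterpart is $\Omega(\log n)$ or $\Omega(n)$, borrowing known exponential separations from the standard communication-complexity literature and pushing them through the memoryless framework. For~(i), $\QNM^{\rightarrow}(F)\leq\NM^{\rightarrow}(F)$: recall from Fact~\ref{fact:nm-oneway_equal_det_one_way} that $\NM^{\rightarrow}(F)=\textsf{D}^{\rightarrow}(F)$, and similarly a one-way quantum memoryless protocol is just a one-way quantum protocol (Alice has no clock, so she sends everything in one shot), giving $\QNM^{\rightarrow}(F)=Q^{\rightarrow}(F)$ up to constants. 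So I would invoke a known exponential gap between one-way quantum and one-way deterministic (indeed one-way randomized) communication complexity --- the Boolean Hidden Matching problem $\mathsf{BHM}$ of Gavinsky et al., for which $Q^{\rightarrow}(\mathsf{BHM})=O(\log n)$ while $R^{\rightarrow}(\mathsf{BHM})=\Omega(\sqrt n)$, hence $\textsf{D}^{\rightarrow}(\mathsf{BHM})=\Omega(\sqrt n)$. This yields an exponential separation for~(i) after the obvious translation.

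For~(ii), $\QM^{\rightarrow}(F)\leq\M^{\rightarrow}(F)$: here I would use the $\IP_n$-type trick already used in Theorem~\ref{thm:exponentiallyweakinequality}. Since $\M^{\rightarrow}(F)=\Theta(\log\textsf{RO}(F))$ by the Papakonstantinou et al.\ characterization, a lower bound of $\M^{\rightarrow}(F)\ge n/4$ is available for inner product. On the quantum side, I would design a one-way memory-no-memory quantum protocol in which Alice repeatedly sends a short quantum fingerprint (or uses the one-way quantum protocol for the relevant problem) round after round --- since Alice has memory she can index her messages, and Bob, being memoryless, just tests each incoming state and halts when he detects the answer. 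Concretely, for the Hamming-distance/equality-style promise problem ($x=y$ versus $\mathrm{dist}(x,y)=n/2$) already used at the end of Section~\ref{sec:charintermsofBP}, Alice sends $O(1)$ copies of $\ket{\psi_x}$ in a single round and Bob runs the swap test; this gives $\QM^{\rightarrow}(F)=O(\log n)$ against $\M^{\rightarrow}(F)=\Omega(n)$, an exponential gap. Alternatively, a Boolean total function separation can be obtained from $\IP_n$ composed with a small gadget together with a quantum one-way protocol beating $\textsf{RO}$.

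For~(iii), $\QM(F)\leq\M(F)$: both sides are at most $\log n$ (Fact~\ref{fact:MQMissmall}), so the best possible separation is $\log n$ versus $O(\log\log n)$ or $O(1)$, i.e.\ still exponential in the complexity measure. I would take a promise problem such as the Hidden Matching / distributed-Deutsch-Jozsa-style problem where the two-way deterministic --- and in fact the two-way $\M$ --- complexity is $\Omega(\log n)$ (via a reduction to two-way deterministic communication, which lower-bounds $\M$ since a size-$s$ $\M$ protocol running for $T$ rounds gives a deterministic protocol of total cost $sT$, combined with a fooling-set or rank argument), while a two-way quantum memoryless protocol solves it with $O(1)$-qubit messages by sending a constant-size fingerprint state and having Bob perform a swap test. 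The main obstacle, as elsewhere in this paper, is that \emph{lower bounds} on the classical memoryless models must be derived from genuinely one-way or two-way classical communication lower bounds, and one must be careful that the reduction does not blow up the instance size; the cleanest route is to use the promise versions of $\mathsf{BHM}$ / Gap-Hamming / Equality-vs-half-distance, for which both the classical lower bounds and the $O(1)$-qubit quantum fingerprinting upper bounds are already in the literature and translate to the memoryless setting with only constant-factor overhead.
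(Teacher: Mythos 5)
Your treatment of inequality~(i) matches the paper's proof essentially verbatim: both use the Boolean Hidden Matching problem, the $O(\log n)$-qubit one-way quantum protocol, and the $\NM^{\rightarrow}(F)=\textsf{D}^{\rightarrow}(F)=\Omega(\sqrt n)$ lower bound. For~(ii) your high-level plan (quantum fingerprinting with a swap test for the upper bound, a classical one-way memory--no-memory lower bound) is the same as the paper's, but the paper works with a \emph{total-function extension} of $\GHD_n$ precisely because the $\M^{\rightarrow}$ lower bound it invokes (Theorem~4.11 of Song's thesis) is stated for all total extensions of $\GHD_n$; your candidate problem ``$x=y$ versus $d(x,y)=n/2$'' is a different promise problem, and you give no argument that it has $\M^{\rightarrow}=\Omega(n)$. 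That gap is fillable, but as written the lower bound for~(ii) is unsupported.

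For~(iii) there is a genuine error. You want a problem with $\M(F)=\Omega(\log n)$ but $\QM(F)=O(1)$, and you propose getting the quantum side by ``sending a constant-size fingerprint state and having Bob perform a swap test.'' This cannot work: the fingerprint state for an $n$-bit input lives on $\Theta(\log n)$ qubits, and the party performing a swap test must hold both states simultaneously. In the $\QM$ model Bob is the memoryless party, so he cannot accumulate a $\Theta(\log n)$-qubit fingerprint across rounds from an $O(1)$-qubit message register; the swap test would force the message register itself to be $\Theta(\log n)$ qubits, at which point the separation collapses (and note $\M(F)\le\log n$ always, by Fact~\ref{fact:MQMissmall}). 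There is also a quantitative mismatch in your lower-bound sketch: the reduction you describe --- a cost-$s$, $T$-round $\M$ protocol yields a deterministic protocol of cost $sT$, combined with $T\le 2^s$ --- gives $\M(F)=\Omega(\log\textsf{D}(F))$, not $\Omega(\log n)$; you only get $\Omega(\log n)$ when $\textsf{D}(F)$ is polynomially large, but for such problems (BHM, distributed DJ) the $\QM$ upper bound is $\Theta(\log n)$, not $O(1)$. The paper sidesteps both obstacles with a bespoke construction: the promise function $\Nequality$ has $\textsf{D}(\Nequality)=\Omega(\log n)$, hence $\M(\Nequality)=\Omega(\log\log n)$, while for the quantum upper bound it is \emph{Bob} who repeatedly sends the single-qubit state $\sqrt{1-\Int(y)/N}\,\ket{0}+\sqrt{\Int(y)/N}\,\ket{1}$ (which he can regenerate without memory since it depends only on $y$), and \emph{Alice} --- the party with memory --- accumulates $\Theta(2^{3n})$ measurement outcomes to estimate $\Int(y)$, giving $\QM(\Nequality)=1$. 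The key idea you are missing is this reversal of roles: put the quantum state production on the memoryless side and the statistical accumulation on the side with memory, using an amplitude encoding rather than a fingerprint.
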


	\begin{proof} We prove the three separations individually below.
	
	\paragraph{First inequality.} In order to prove that the first inequality is exponentially weak, we use the the standard Boolean Hidden matching problem introduced in~\cite{yossef:exponential,gavinsky:exponential}: Alice is given $x\in \01^n$, Bob is given a matching $M\in \01^{n/2\times n}$ on an $n$-vertex bipartite graph\footnote{Think of $M$ as the incident matrix where the rows are labelled by the $n/2$ edges and the columns are indexed by the $n$ vertices. Since $M$ is a matching every row has Hamming weight at most $2$.} and $y\in \01^n$ and they need to decide if $Mx=y$ or $Mx=y\oplus 1^n$. 	The quantum protocol in the memoryless setting is the following: Alice prepares  the $(\log n)$-qubit state $\ket{\psi}=\frac{1}{\sqrt{n}}\sum_{x\in \01^n}(-1)^{x_i}\ket{i}$ and sends it to Bob.  Bob performs the two-outcome measurement $\big\{\frac{1}{\sqrt{2}}(\ket{k}\pm\ket{q}):(k,q)\in M\big\}$ where $(k,q)\in [n]^2$ refers to the two vertices of of an edge in the matching. Observe that probability of obtaining a basis state outcome $\frac{1}{\sqrt{2}}(\ket{k}+\ket{q})$ is given~by
		$$
		\frac{1}{\sqrt{2}}\langle \psi\vert (\ket{k}+\ket{q}) \rangle =\frac{1}{2n}((-1)^{x_k}+(-1)^{x_q})^2,
		$$
		which equals $0$ if $x_k\oplus x_q=1$. So if Bob obtains $\frac{1}{\sqrt{2}}(\ket{k}+\ket{q})$, he knows with certainty that $x_k\oplus x_q=0$ and similarly if he obtains the state $\frac{1}{\sqrt{2}}(\ket{k}-\ket{q})$ he is sure that $x_k\oplus x_q=1$. Now, Bob looks at which row in the matching corresponds to the edge $(k,q)$ and suppose it is the $i$th row, then Bob outputs $x_k\oplus x_q\oplus y_i$. Note that this bit equals $0$ if and only if  $Mx=y$ and is $1$ otherwise. Hence, $\QNM^{\rightarrow}(F)\leq O(\log n)$.
		In order to prove the memoryless lower bound, we immediately get $\NM^{\rightarrow}(F)=D^\rightarrow(F)=\Omega(\sqrt{n})$, where the first equality follows from Fact~\ref{fact:nm-oneway_equal_det_one_way} and the second equality was proven in~\cite{gavinsky:exponential}.
		
        \paragraph{Second inequality.}  In order to prove that the second inequality is exponentially weak, we first define a partial function \emph{gap-hamming distance} $\GHD_n:D\rightarrow \01$ where $D=\{(x,y)\in \01^{2n}: d(x,y)\leq n/3 \text{ or } d(x,y)\geq 2n/3\}$ and $\GHD(x,y)=0$ if $d(x,y)\leq n/3$ and $1$ if $d(x,y)\geq 2n/3$. We then define the total function~$g$ as $g(x,y)=\GHD(x,y)$ if $(x,y)$ satisfy the promise of $\GHD$  and otherwise $g(x,y)=0$. In order to compute $g$, Alice first sends $O(1)$ copies the state $\ket{\psi_x}=\frac{1}{\sqrt{n}}\sum_i(-1)^{x_i}\ket{i}$. Bob first performs the swap test between $\ket{\psi_x}$ and $\ket{\psi_y}=\frac{1}{\sqrt{n}}\sum_i(-1)^{y_i}\ket{i}$. The swap test~\cite{swaptest} is a well-known quantum protocol takes in two quantum states $\ket{\phi},\ket{\psi}$ as inputs and outputs a bit $b\in \01$ such that 
        $$
        \Pr[b=1]=1/2+|\langle \phi | \psi \rangle|^2/2.
        $$
        In our quantum protocol, the swap test outputs~$1$ with~probability 
		$$
		\Pr[1]=\frac{1}{2}+\frac{1}{2}|\langle \psi_x | \psi_y \rangle|^2=	\frac{1}{2}+\frac{1}{2}\Big(\frac{1}{n}\sum_i (-1)^{x_i\oplus y_i}\Big)^2. 
		$$
		In the case where $d(x,y)\leq n/3$, observe that $\Pr[0]\geq 2/3$ and in the case where $d(x,y)\geq 2n/3$ we have $\Pr[1]\geq 2/3$. So Bob runs $O(1)$ swap tests and suppose he obtains $\geq 2/3$-fraction of $1$-outcomes, he outputs $1$ and suppose he obtains $\geq 2/3$-fraction of $0$-outcomes, he outputs $0$. If this were not the case, then Bob simply outputs $0$. 
		Observe that with constant error probability (which can be reduced by performing more swap tests) Bob's output equals $g(x,y)$ for all $x,y$. In particular, the overall communication cost to compute $g$ was $O(\log n)$, so we have $\QM^{\rightarrow}(g)=O(\log n)$. However, Song~\cite{song2014space} (in particular, Theorem 4.11 in his PhD thesis) shows that \emph{every} total-function extension $g$ of the  $\GHD_n$ problem satisfies $\M^{\rightarrow}(g)\geq \Omega(n)$.\footnote{Here total-function extension of $\GHD_n$ means that: for every $g:\01^n\times \01^n\rightarrow \01$ defined as $g(x,y)=\GHD_n(x,y)$ if $d(x,y)\leq n/3$ or $d(x,y)\geq 2n/3$ and $g(x,y)$ is defined arbitrarily to take values in $\01$ when $d(x,y)\in \{n/3,\ldots,2n/3\}$.} Hence $g$ gives an exponential separation for the second inequality.
		
		\paragraph{Third inequality.} 
         For the last inequality we use a promise distance $\Nequality$ function defined as $\Nequality(x,y)=1$ if $|\Int(x)-\Int(y)|\leq \sqrt{N}$, and $\Nequality(x,y)=0$ for  $(x,y)$ satisfying $|\Int(x)-\Int(y)|\geq 2\sqrt{N}$, promised the inputs to $\Nequality$ satisfy one of these two conditions. We first show that for $\QM(\Nequality)= 1$. 
  		Here, Bob sends the same $1$-qubit state $\sqrt{1-\textsf{Int}(y)/N}\ket{0}+\sqrt{\textsf{Int}(y)/N}\ket{1}$ for $T=\Theta(2^{3n})$ many rounds (although Bob doesn't have the memory to store which round he is in, Alice can send him bits $\in \01$ asking him to send these $T$ qubits). Alice stores each of the qubits, and after receiving $T$ such qubits she measures all these qubits. Suppose she obtained the string  $\widetilde{y}\in \01^T$, it is not hard to see by a Hoeffding bound that, with high probability, $\textsf{Int}(\widetilde{y})$ is close to $\textsf{Int}(y)$.\footnote{Measuring this qubit $T$ times is equivalent to flipping $T$ $\in \01$-valued coins where the probability of $1$ is $\Int(y)/N$. Suppose we flip $T$ coins and observe the number of $1$s in these flips, let us call this number $M$. Observe that expectation of $M=T\cdot \Int(y)/N$, then the multiplicative Chernoff bound implies that $\Pr[M\leq(1-\delta)T\cdot \Int(y)/N]\leq \exp(-\delta^2T\cdot \Int(y)/2N)$ and $\Pr[M\geq(1+\delta)T\cdot \Int(y)/N]\leq \exp(-\delta^2T\cdot \Int(y)/2N)$. By letting $T=\Theta(N/\delta^2)$ and choosing $\delta=1/N$, we can ensure that $M\cdot N/T$ is $(1-O(1/N))$-close to $\Int(y)$.} Hence with success probability arbitrarily close to $1$, Alice obtains a $\widetilde{y}\in \01^n$ such that $|\Int(\widetilde{y})-\Int(y)|\leq O(1)$. Once Alice obtains such a strong approximation of $y$, she can simply check if $|\Int(x)-\Int(\widetilde{y})|$ is at most $\sqrt{N}$ or at least $2\sqrt{N}$ and compute $\Nequality(x,y)$.   

 	We now show that $\M(\Nequality)\geq \Omega(\log n)$. To see this, we first show that $\M(f) \geq \Omega(\log \textsf{D}(F))$ (where $\textsf{D}(F)$ is the standard deterministic communication complexity). Suppose $\M(\Nequality)=s$, then observe that the number of rounds in this protocol is at most $2^s$: note that Alice has memory and Bob has no memory, so there is no point in Alice sending the same message $m\in \01^s$ twice, on both instances Bob would return $g_y(m)$ and Alice would have known $g_y(m)$ the first time she sent $m$ to Bob. So after $2^s\cdot s$ bits of communication, this entire transcript is a perfectly fine transcript for a \emph{detereministic} communication protocol, so we have $\textsf{D}(F)\leq 2^s\cdot s$.  It now remains to show that $\textsf{D}(F)$ is large.  In order to show this lower bound, let us fix Alice's input and for simplicity we can let $x=0^n$, in which case they need to decide if $\textsf{Int}(y)\leq \sqrt{N}$ or $\textsf{Int}(y)\geq 2\sqrt{N}$. The only way this is possible is if at most  the rightmost $(\log \sqrt{N}) - 1$ bits of $y$ are one, or if there exists at least a single $1$ in the last $n-\log \sqrt{N}$ bits of $y$. It is easy to see that an adversarial argument shows that the deterministic communication of deciding which  is the case, takes $\log \sqrt{N}=\Omega(\log n)$ bits of communication between Alice and Bob. This implies  $\textsf{D}(\Nequality)\geq \Omega(\log n)$, hence we have $\M(\Nequality)=\Omega(\log \log n)$.  	
 			\end{proof}

One drawback in the exponential separations above is that we allow a quantum protocol to err with constant probability but require the classical protocols to be correct with probability $1$.  We remark that except the second inequality, the remaining inequalities also show exponential separations between the randomized memoryless model (wherein Alice and Bob have public randomness and are allowed to err in computing the function) versus the corresponding quantum memoryless model. A natural question is to extend these separations even when the classical model is allowed to err with probability at least $1/3$.

	\subsection{Relating the Garden-hose model, Space-bounded communication complexity and Memoryless complexity}
	\label{sec:garden}
	In this section, we show that the memoryless communication complexity $\NM(F)$ of a Boolean function~$F$ is equal to the logarithm of the garden-hose complexity up to an additive constant and is equal to the space bounded communication complexity up to factor 2. Thus memoryless communication complexity $\NM$ provides a clean and simple framework to study these models.
	The garden-hose model of computation was introduced by Buhrman et al.~\cite{gardenhose:intro} to understand quantum attacks on position-based cryptographic schemes. The space bounded communication complexity was introduced by Brody et al.~\cite{brody2013space} to study the effects on communication complexity when they players are limited in their ability to store information from previous rounds. 
	Below we briefly define the garden-hose model and the space-bounded communication complexity model.
	
		\paragraph{Garden-hose model.} In the garden-hose model of computation, Alice and Bob are neighbours {(who cannot communicate)} and have few pipes going across the boundary of their gardens. Based on their private inputs $x,y$ and a function $F:\01^n\times \01^n\rightarrow \01$ known to both, the players connect some of the opening of the pipes on their respective sides with garden-hoses. Additionally, Alice connects a tap to one of the pipes on her side. Naturally, based on the garden-hose connections,  water travels back and forth through some of the pipes and finally spills on either Alice's or Bob's side, based on which they decide if a function $F$ on input $x,y$ evaluates to $0$ or $1$. It is easy to show that Alice and Bob can compute every function using this game. The garden-hose complexity $\GH(F)$ is defined to be the minimum \emph{number of pipes} required to compute $F$ this way for all possible inputs $x,y$ to  Alice and Bob. For more on garden hose complexity, we refer the interested reader to~\cite{gardenhose:intro, podder:garden,speelman:thesis2,speelman:thesis}.

        \paragraph{Space bounded communication complexity} Alice and Bob each have at most $s(n)$ bits of memory. Based on their private inputs $x,y$ they want to compute the function $F:\01^n \times \01^n \rightarrow \01$ in the following manner: At each round Alice receives a single bit message $m_B \in \01$ from Bob and based on her input $x$, the incoming message $m_B$ and her previous $s(n)$-bit register content, she computes a new $s(n)$-bit register  and decides whether to stop and output $0/1$ or to continue. Bob does the same. At the beginning of the game, the register contents of both players are set to the all-zero strings. The game then starts by Alice sending the first message and continues until one of players outputs 0/1. Space bounded communication complexity $\SM(F)$ of computing a function~$F$ is the minimum register size $s(n)$ required to compute $F$ on the worst possible input~$(x,y)$.    
        
        They have also studied one-way communication complexity variant of this space bounded memory model in which Bob can have two types of memory: an oblivious memory (Which gets updated depending on only Alice’s message) and a non-oblivious memory (which can depend on Bob's input).
                Brody et al.~\cite{brody2013space} claimed that space bounded communication complexity is equal to the garden-hose communication complexity upto factor 2.
        
        \begin{claim}[\cite{brody2013space}]
        \label{label:claim-brody-gh-sm}
        For every function $F$ there exists constants $c\in (0,1), d \in \mathbb{N}^+$ such that \[ c \cdot 2^{\SM(F)} \leq \GH(F) \leq 2^{2\SM(F)+2} +d \]
        \end{claim}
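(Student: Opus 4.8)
The plan is to prove the two inequalities separately by direct simulations, both resting on the following dictionary. A garden-hose protocol with $k$ pipes is essentially a two-player protocol whose only state is the index in $[k]$ of the pipe currently holding the water: Alice's hose connections (depending on $x$) and Bob's (depending on $y$) are the partial involutions that update this index on Alice's and Bob's turns, and the water spills --- with the output read off from which side --- as soon as the current index is an unmatched opening. A space-bounded protocol with register size $s$ is a two-player protocol whose state is $(a,b,m,t)$: Alice's $s$-bit register $a$, Bob's $s$-bit register $b$, the single in-transit message bit $m$, and a turn flag $t$; between consecutive turns of a player this state is updated by a map depending on that player's input. The claim says these two descriptions have the same size up to the stated loss.

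\emph{Left inequality, $c\cdot 2^{\SM(F)}\leq \GH(F)$.} Given a garden-hose protocol for $F$ with $k=\GH(F)$ pipes, I would simulate it by a space-bounded protocol whose register holds the current pipe index ($\lceil\log k\rceil$ bits) together with an $O(1)$-bit mode flag. A player whose turn it is, say Alice, first applies her partial involution to the stored index $i$; if $i$ is unmatched on her side she outputs the value of a spill on her side and halts, and otherwise she must communicate $j=M_A^x(i)$ to Bob. Since the space-bounded model charges only the register size, not the number of rounds, Alice can transmit $j$ \emph{in unary}: she repeatedly sends a $1$ while decrementing her register, then sends a terminating $0$; Bob, in receive mode, increments his zero-initialised register on each $1$ and, on the $0$, holds $j$ and switches to compute mode. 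Thus the pipe-index register also serves as the transmission counter and both registers only ever hold values below $k$; a constant number of extra modes/flag bits handle the ``spilled, so output now and halt'' signal so that both players terminate (which they do, since the water trajectory of a valid protocol visits no configuration twice and hence is finite). This yields $\SM(F)\leq\lceil\log\GH(F)\rceil+O(1)$, i.e.\ $\GH(F)\geq c\cdot 2^{\SM(F)}$ for an absolute constant $c$.

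\emph{Right inequality, $\GH(F)\leq 2^{2\SM(F)+2}+d$.} I would take the pipes to be the global configurations $(a,b,m,t)$ of an $\SM(F)$-space protocol --- $2^{2\SM(F)+2}$ of them --- plus a constant number $d$ of auxiliary sink pipes carrying the final output. Alice's partial matching is ``perform one Alice-turn transition'' and Bob's is ``perform one Bob-turn transition''. The only subtlety is that a partial matching must be a partial \emph{injection}, whereas a space-bounded transition need not be; this is exactly where one first makes the protocol \emph{reversible}. Applying the construction of Lange--McKenzie--Tapp~\cite{lange2000reversible} (the same tool used for Theorem~\ref{lem:nmghm}) turns the $\SM(F)$-space protocol into a reversible one with only a constant-factor register overhead, and its configurations --- now with partial-injective per-turn transitions --- are the pipes. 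A quicker but lossier version of the bound is already available from the paper: $\log\GH(F)\leq\NM(F)+4\leq 2\SM(F)+5$ by Theorem~\ref{lem:nmghm} and Lemma~\ref{label:lemma-nm-sm}, hence $\GH(F)\leq 2^{2\SM(F)+5}$; tightening the exponent to $2\SM(F)+2$ and isolating the $d$ sink pipes is what the dedicated reversibilization argument provides.

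\emph{Main obstacle.} The left inequality and the configuration count for the right inequality are routine; the real work is the reversibilization. Applied off the shelf, the LMT simulation carries an additional $\Theta(\SM(F))$-bit preimage-search counter and would only give $\log\GH(F)\leq 3\SM(F)+O(1)$. Recovering the claimed $2\SM(F)+2$ requires exploiting the alternating bipartite structure of a communication protocol --- the ``which player moved last'' bit is free from parity, and the backward step can be arranged to reuse the current register rather than a fresh counter --- together with the $d$-pipe bookkeeping that routes the $0/1$ output to the correct side; I expect that careful accounting along these lines matches Brody et al.'s constants exactly.
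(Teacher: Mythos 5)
The paper itself does not prove this claim; it is imported from Brody et al.~\cite{brody2013space} as a black box, so there is no in-paper argument to compare against. Evaluating your attempt on its own merits: the left inequality is correct and cleanly argued --- storing the current pipe index in the register, relaying it in unary, and exploiting the fact that the space-bounded model places no bound on the number of rounds is exactly the right mechanism, and the $O(1)$ mode bits you add absorb the ``am I transmitting / receiving / about to halt'' bookkeeping.

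For the right inequality, your conceptual skeleton --- pipes indexed by global configurations, non-injectivity of one-turn transitions as the obstruction, an LMT-style reversible walk as the repair --- is the correct one. You are also right that Alice's matching is well-defined from $x$ alone even though the pipe labels contain $b$, because her transition passes $b$ through untouched; this point deserves to be said explicitly since it is the only reason configurations can serve as pipes in a two-sided model. However, the write-up does not actually establish the stated exponent $2\SM(F)+2$: you first assert LMT gives a ``constant-factor register overhead'' (which only yields $\GH(F)\le 2^{O(\SM(F))}$), then that it gives $3\SM(F)+O(1)$, and finally conjecture that the bipartite structure rescues the constant. That last step is the content of the claim and cannot be left as an expectation. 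The clean route is not to reversibilize the protocol as a preliminary transformation but to take configurations to be $(a,b,m)$ only --- the turn flag is free because which side of the boundary the water sits on already encodes whose move it is --- giving $2^{2\SM(F)+1}$ nodes, and then apply the two-pipes-per-node forward/reverse Euler-walk construction, exactly as the paper carries out in the proof of Theorem~\ref{lem:nmghm}, directly to this configuration graph. The doubling yields $2^{2\SM(F)+2}$ pipes, and the extra $d$ pipes route the spill to the side that encodes the correct output bit. Your fallback observation that Lemma~\ref{label:lemma-nm-sm} plus Theorem~\ref{lem:nmghm} already gives $\log\GH(F)\le 2\SM(F)+5$ is correct and is a reasonable sanity check, but it does not close the gap to the claimed constant.
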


    We show the following relation between 
    
    \begin{lemma}
    \label{label:lemma-nm-sm}
    For every function $F$,
    \[
    \NM(F) \leq 2\SM(F)+1, \text{ and } \hspace{2mm} \SM(F) \leq \NM(F) + \log \NM(F)
    \]
    \end{lemma}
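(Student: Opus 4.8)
The plan is to establish the two inequalities separately by direct simulation, in each direction converting one protocol into the other with the appropriate overhead. For the first inequality $\NM(F) \leq 2\SM(F)+1$, I would start from a space-bounded protocol with register size $s = \SM(F)$. In that model, at each round a player holds an $s$-bit register and receives a single bit from the other party. The key observation is that the entire ``state'' of the computation from one party's point of view, at the moment it is about to act, is captured by the incoming bit together with the $s$-bit register content of whoever is about to speak next; but since the register is updated deterministically from (input, old register, incoming bit), it suffices to carry the register content around. So I would design an $\NM$ protocol whose message register holds: the $s$-bit register content, plus one bit encoding whose ``turn'' it is / which single bit was just sent (this is where the factor $2$ and the extra $+1$ come from — roughly $2s$ bits for the register plus a constant number of control bits, which one checks fits into $2s+1$). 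Alice's memoryless function $f_x$ reads off the current register content from the message, simulates one step of her space-bounded transition function on input $x$, and writes the new register content back into the message; Bob's $g_y$ does the symmetric thing. Termination and the output convention of the $\NM$ model are matched to the ``stop and output $0/1$'' event of the space-bounded protocol. The main care needed here is bookkeeping: making sure the control information and the register content really pack into $2\SM(F)+1$ bits and that the alternation structure (Alice speaks, then Bob, then Alice) is faithfully reproduced by memoryless deterministic maps that have no access to a clock.

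For the second inequality $\SM(F) \leq \NM(F) + \log \NM(F)$, I would go the other way: start from an $\NM$ protocol of cost $t = \NM(F)$, in which the message register is $t$ bits wide and the players apply $f_x, g_y : \01^t \to \01^t$. A space-bounded protocol only gets to send one bit per round, so I would have the two parties transmit each $t$-bit message one bit at a time, spending $t$ rounds to ship one memoryless message across. The receiver must buffer the incoming bits, which costs $t$ bits of register; additionally the receiver needs a counter to know how many of the $t$ bits have arrived so far, which costs $\log t$ bits; and once a full $t$-bit block has been received, the party applies its local function $f_x$ or $g_y$ (it has the whole block in its register, so this is a purely local computation) and then ships the result out one bit at a time, again using the counter. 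Total register size is $t + \log t + O(1) = \NM(F) + \log \NM(F)$ up to the stated additive slack. One has to check the edge cases — who starts, how the reconstructed block is distinguished from a partially-received one (the counter handles this), and how the $\NM$ termination condition ``$1^{t-1}b$'' is detected and turned into a space-bounded ``output $b$''.

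The main obstacle I expect is in the first direction: reconciling the asymmetry of the space-bounded model (single-bit messages, but genuine memory on both sides) with the $\NM$ model (wide messages, no memory at all) while keeping the message width at exactly $2\SM(F)+1$ rather than, say, $2\SM(F)+O(1)$ or $2\SM(F)+2$. In the space-bounded model the ``next move'' of a player depends on that player's stored register, which the other player does not know; in the memoryless simulation neither player remembers anything between rounds, so everything relevant must live in the shared message. The honest thing is that we need to carry \emph{both} an $s$-bit snapshot of the relevant register \emph{and} the one bit that would have been communicated in the space-bounded protocol — plus perhaps a turn/parity bit — and the claim is that with a careful encoding this all fits in $2s+1$. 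I would double-check whether the construction actually needs a turn bit at all: since the two parties apply different maps $f_x$ vs.\ $g_y$, the alternation is automatic, so possibly only the $s$-bit register plus the single transmitted bit are needed, giving $s + s + 1 = 2s+1$ exactly. The rest is routine simulation bookkeeping, and the corollary that $\log\GH(F)$ and $\NM(F)$ are within a constant factor then follows by combining this lemma with Claim~\ref{label:claim-brody-gh-sm}.
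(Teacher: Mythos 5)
Your second direction (bit-serialize the $t$-bit $\NM$ message over $t$ rounds, keeping a $t$-bit buffer and a $\log t$-bit counter in the local register) is exactly the paper's argument, and is fine.

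Your first direction lands on the correct bound, but the reasoning in your prose does not actually support it: you repeatedly describe the message as holding \emph{one} $s$-bit register plus a bit (which would be $s+1$), and then write ``$s+s+1=2s+1$'' without accounting for the second $s$. It is not a turn bit or generic ``control bits'' -- you are right that no turn bit is needed, since the alternation of $f_x$ and $g_y$ is built into the $\NM$ model. The second $s$-bit slot is there because the message must carry \emph{both} players' local registers. A memoryless player has forgotten her own register from two rounds ago, and the other party cannot reconstruct it for her (it depends on her private input), so the only option is to echo it through the message. The paper's construction makes this explicit: the $\NM$ message is the triple $(m_A, m_B, b)$; when Alice acts she reads the pair $(m_A, b)$, updates $m_A$ to a fresh $m_A'$, copies $m_B$ unchanged, and writes $(m_A', m_B, b')$, and Bob does the symmetric thing on his slot. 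That gives $s+s+1 = 2\SM(F)+1$ for a clearly identified reason, rather than by after-the-fact arithmetic. With that clarification, your plan matches the paper.
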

	
	\begin{proof} 
	We first prove the first inequality. Let $\SM(F)=t$. In the protocol, let Alice's and Bob's registers be $m_A,m_B \in \01^{t}$ respectively. Then for the memoryless protocol, Alice and Bob can put $m_A,m_B$ as the message and send them back and forth along with the intended one bit message that they would have  send in any round of the space bounded protocol. Thus a message in the $\NM$ protocol would be a tuple $(m_A,m_B,\01)$. Hence we get the desired $2\SM(F)+1$ bound.
	
	For the second inequality, let $\NM(F)=t$. In any space bounded protocol, since the players are allowed to have local registers but the message size is a single bit, simulating a $\NM$ protocol can be done in the following way: at each round the players store the $t$-bit long message (that they would have sent in the $\NM$ protocol) in their local memory and then send it bit by bit. Since Alice and Bob need to know which bit of the $t$ bits message are being sent at any round they both maintain a $\log t$ bits clock as well. Thus any $\NM$ protocol with $t$ message size can be simulated by a space bounded protocol with $t+ \log t$ bits message local registers.   
	\end{proof}
	
	Using the Claim~\ref{label:claim-brody-gh-sm} and Lemma~\ref{label:lemma-nm-sm} we can conclude that the logarithm of the garden-hose complexity is equal to the memoryless $\NM$ complexity up to factor 2. This seems interesting already given we can connect these two models, but in the $\NM$ model, even factor-$2$s are important since they are related to formula lower bounds. 
	Now we show that it is possible to further tighten the relation in the lemma above. Below we show that $\NM$ is actually equivalent to the logarithm of the garden-hose complexity up to an \emph{additive term} of 4.
	The first observation relating the garden-hose model and memoryless communication complexity is that, the garden-hose model is exactly the $\NM$ communication model, except that in addition to the memoryless-ness of Alice and Bob, there is a \emph{bijection} between the incoming and the outgoing messages of both players (i.e., the local functions Alice and Bob apply $\{f_x:\01^s\rightarrow \01^s\}_x,\{g_y:\01^s\rightarrow \01^s\}_y$  are \emph{bijective functions}. We now state and prove the theorem which shows how $\GH$ is related to the standard memoryless communication model. We thanks Florian Speelman for the proof of the first inequality below. 
	
	\begin{theorem}\label{lem:nmghm}
	For $F:\01^n\times \01^n\rightarrow \01$, we have $\log\GH(F) -4 \leq \NM(F)\leq \log \GH(F)$.
	\end{theorem}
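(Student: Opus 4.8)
The goal is the pair of inequalities $\log\GH(F) - 4 \le \NM(F) \le \log\GH(F)$. The upper bound $\NM(F) \le \log\GH(F)$ is the easier direction: given a garden-hose protocol with $G = \GH(F)$ pipes, I would build an $\NM$ protocol whose message register has size $\lceil \log G\rceil$ and simply encodes \emph{which pipe the water is currently in}. Alice's function $f_x$ and Bob's function $g_y$ are read off from their hose-connections: if on Alice's side pipe $i$ is connected by a hose to pipe $j$, then $f_x(i) = j$; the tap is the initial message $f_x(0^s)$, and the water spilling on Alice's (resp.\ Bob's) side is encoded as the terminating message $1^{s-1}0$ (resp.\ $1^{s-1}1$). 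Because garden-hose connections are a partial matching, these functions are partial injections, which is certainly allowed in an $\NM$ protocol, and the sequence of pipes the water traverses is exactly the sequence of messages exchanged, so the protocol computes $F$. This gives $\NM(F) \le \lceil\log\GH(F)\rceil$; a small amount of care ($\lceil \cdot \rceil$ versus not) is folded into the additive slack of the other direction.

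\textbf{The reverse direction.} For $\log\GH(F) - 4 \le \NM(F)$, start from an optimal $\NM$ protocol with message size $s = \NM(F)$, given by look-up tables $\{f_x\}, \{g_y\}: \01^s \to \01^s$. The obstacle is that $f_x, g_y$ need not be bijections — two incoming messages can map to the same outgoing message — whereas a garden-hose protocol is inherently reversible (water flow is a bijection between pipe-ends). The fix, as the introduction already advertises, is to invoke the Lange--McKenzie--Tapp trick from \cite{lange2000reversible}: any deterministic computation can be made reversible without increasing the space by more than a constant factor, essentially by replacing the state graph with its "in-and-out" double cover and traversing it via a depth-first-search-like rule (at each node, if you arrived along an edge, try the next outgoing edge; if none remain, back up). Concretely, I would form the functional graph of the $\NM$ protocol — nodes are the $\le 2^{s+1}$ messages in $\Mset_A \cup \Mset_B$, edges given by the $f_x, g_y$ as in the proof of Theorem~\ref{thm:d_nomem_logT} — and apply the reversible-simulation construction to obtain a garden-hose protocol in which the "pipe" records both a message and a direction/phase bit, plus whatever constant overhead the LMT construction needs. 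Counting: each message contributes a constant number of pipe-states, so the number of pipes is at most $2^{s} \cdot c$ for an absolute constant $c$, giving $\log\GH(F) \le s + \log c = \NM(F) + O(1)$, and chasing the constant through the LMT construction pins $\log c \le 4$.

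\textbf{Where the work is.} The main obstacle is bookkeeping the constant precisely enough to land at an additive $4$ rather than merely $O(1)$: one must track exactly how many auxiliary pipe-states per message the reversibilization introduces (the direction bit, the two copies of each node in the bidirected graph, and the extra "start" and "spill" pipes needed to interface the DFS traversal with the tap and the two output sides), and verify that $\GH(F) \le 16\cdot 2^{\NM(F)}$. A secondary subtlety is checking that the reversible traversal actually \emph{terminates} on every input and returns a definite answer (the original $\NM$ protocol terminates, so its functional graph has the property that from the start node one reaches a sink; the DFS-style reversible walk on the bidirected graph then necessarily exits at the corresponding sink after finitely many steps) — this is where one must be careful that the $\NM$ protocol's promise "the transcript eventually becomes $1^{s-1}b$" is strong enough, possibly after first truncating cycles as in the $T \le 2^s$ remark in the proof of Fact~\ref{lem:nondeternm}. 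Once termination and the constant are nailed down, the equivalence $\NM(F) = \log\GH(F) \pm 4$ follows, and combined with Lemma~\ref{label:lemma-nm-sm} it sharpens the factor-$3$ relation among $\NM$, $\GH$, and $\SM$ advertised earlier.
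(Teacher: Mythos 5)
Your proposal matches the paper's proof essentially verbatim on both directions: the easy inequality encodes ``which pipe the water is in'' as the message, and the hard inequality invokes the Lange--McKenzie--Tapp reversibilization (forward/reverse pipe copies, DFS-style backtracking at merge points) on the message graph of the optimal $\NM$ protocol, yielding $\GH(F)\le O(1)\cdot 2^{\NM(F)}$. The only substantive difference is that you leave the constant-chasing (the factor $16$) as a claim to verify, whereas the paper spells out the explicit pipe-connection rules (the $A_i^f,A_i^r,B_j^f,B_j^r$ blocks and the lexicographic tie-breaking) that realize it; your termination concern is also handled the same way the paper implicitly does, via connectedness of the start component.
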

	
	\begin{proof}
	We first prove the first inequality. The idea is to use the construction of Lange, McKenzie and Tapp~\cite{lange2000reversible} to make the computations reversible. For every input pair $(x,y)$, consider the directed graph that determines the $\NM$ protocol with communication $s$. The $2\cdot 2^s$ nodes are labeled by $A_{i}$, for $i \in [2^s]$,  and the same
for the different $B_j$ with $j \in [2^s]$. These nodes represent the message that Alice, respectively Bob, has just received.

Since for every $(x,y)$ Alice and Bob compute fixed functions $f_x, g_y$ respectively, for every node we imagine there being an outgoing edge if (for the
current input) receiving that message would cause the receiving player
to send a specific outgoing message. So $A_i$ has an outgoing edge towards~$B_j$ if upon receiving message $i$, Alice would reply with the message $j$, i.e., $f_x(i)=j$.
In other words, the single outgoing edge of $A_i$ (if any) is determined by $f_x$, and the outgoing edge of $B_j$ is determined by $g_y$. Say for
simplicity that the protocol starts with $B_0$, and there are two special
unique 0 and 1 nodes that determine the output. Also imagine that only
edges from $A$ nodes go to 0, and edges from $B$ nodes go to 1.  The size of
this graph is at most $2^{s+1}$ (we don't count the 0/1 nodes since we
won't need states to simulate them in the $\GH$ model).

Thus the outdegree of every node of this graph is 1, but the indegree is not, and that creates the problem with simulating it in the $\GH$ model. But this can be overcome by using the reversible simulation tricks of \cite{lange2000reversible}. In our simulation, instead of following the path of the original execution, we're going to perform a walk along the connected component that involves the starting
state. This path might visit all nodes in this component (so it is not
time-efficient scheme) but it is one-to-one.
For every node $A_i$, in the $\GH$ protocol we have two pipes. Let us call them $A_i^{f}$ (short for, $A_i$ forward) and $A_i^{r}$ (short for, $A_i$ reverse), and similarly for the $B_j$ nodes.
Alice then has the following strategy for connecting the pipes:
Consider all $i$ in lexicographic order. For each node $A_i^{f}$, Alice looks at which message $j$ she would send if receiving $i$ (i.e., $f_x(i)=j$), and she connects $A_i^{f}$ with the corresponding $B_j^{f}$ in the following manner:
\begin{itemize}
    \item If that connection is already filled, since an earlier $i'$ has already been connected there, Alice connects $A_i^{f}$ to $A_{i'}^{r}$ instead.
    \item Additionally, if for the current $i$ there is no later $i'$ such that $i'$ also tries to connect to $j$, connect $B_j^{r}$ to $A_i^{r}$.
\end{itemize}
Alice also considers all $j$ in lexicographic order. She already has some connection to both~$B_j^{r}$ and~$B_j^{f}$ if there exists an
incoming message $i$ so that Alice would send the message $j$ to Bob. 
If $B_j^{f}$ and $B_j^{r}$ are still empty, Alice connects them together.
Whenever Alice's arrow would go towards 0, she just keeps the pipe open.
We imagine the water originally coming out of $A_i^{f}$, so we don't
need a pipe with that label. Bob's strategy is exactly the same as Alice's, replacing the role of $A_i$ with $B_j$ and $x, y$. And whenever Bob would connect some $B_j^{f}$ to 1 he can just keep it open.

Now if we follow the path along the graph starting from $B_0$, whenever we reach any node $p$ with multiple incoming edges, and we're not the lexicographical first node that connects to it, we're going to go 'backwards' via a sequence $A^{r}$ and $B^{r}$ pipes. But at some point, this backtracking will stop, by encountering a node without predecessor, and we'll come back to the node $p$ again using the corresponding forward pipes. After many such rounds of backtracking we will go forward using the node $p$. Note that Alice and Bob can make the above connections individually only based on their private functions $f_x, g_y$. In total we have multiplied the number of states with 4, which will
become an additive 4 in the comparison with $\NM(F)$.

	Next we prove the second inequality. Let $x,y\in \01^n$ and $\GH(F,x,y)=s$, i.e., there exists~$s$ pipes that Alice and Bob place which form the garden-hose protocol to compute $F(x,y)$. Let the pipe openings on Alice's side be indexed by $A_1,\ldots,A_s$ and  Bob's side be $B_1,\ldots,B_s$.  We now define an $\NM$ protocol as follows: let $f_x:\01^{\log s}\rightarrow \01^{\log s}$ and $g_y:\01^{\log s}\rightarrow \01^{\log s}$ be defined as, for $u,v\in [s]$, we $f_x(u)=v$ if and only if there exists a pipe between $A_u$ and $A_v$ on Alice's side. Similarly, for Bob's side we let $g_y(w)=z$ iff $B_w$ is connected to $B_z$ for $w,z\in [s]$. So, in the $\NM$ protocol, Alice and Bob exchange 
the name of the pipes (which takes $\log s$ bits) through which water flows back and forth in the garden-hose protocol. This continues until one of the players sees a pipe opening in which case Alice or Bob output either $0$ or $1$ respectively. It is clear that the message size is at most $\log s$ since it takes at most $\log s$ bits to provide the name of one of the $s$ pipes. One can easily observe that as long as the garden-hose protocol computes $F$, the $\NM$ protocol also computes~$F$. 
An alternate way to look at this proof of second inequality is via the logarithm space Turing machine characterization of the garden-hose model \cite{gardenhose:intro}. A garden-hose game for computing a function $F$ is polynomially equivalent to a logspace Turing machine $M_F$, which given the pre-determined functions $g,h$, such that $F(x,y) = f(g(x),h(y))$, computes $f$ (The logspace Turing machine for $f$ just alternately looks in the $\NM$ strategy of Alice and Bob and considers which message to send). 
	\end{proof}

Interestingly, Theorem~\ref{lem:nmghm} together with Theorem~\ref{thm:exponentiallyweakinequality} gives us a way to construct a garden-hose protocol using an $\M^\rightarrow$ protocol and, as we will see below, this could result in potentially stronger upper bound on the garden-hose model. In an earlier work of Klauck and Podder~\cite{podder:garden}, it was conjectured that the disjointness function with input size $m=n \cdot 2\log n$ (i.e., with set size $n$ and universe size $n^2$) has a quadratic lower bound $\Omega(m^2)$ in the garden-hose model. Here, we show that $\GH$ protocol for this problem has cost $O(m^2 / \log^2 m)$. Although the improvement is only by a logarithmic-factor, we believe that this complexity can be reduced further which we leave as an open question. 

\textbf{Disjointness with quadratic universe:} Alice and Bob are given $n$ numbers each from $[n^2]$ as a $m=n \cdot 2\log n$ long bit strings. Their goal is to check if all of their $2n$ numbers are unique. Without loss we can assume that the $n$ numbers on the respective sides of Alice and Bob are unique, if not they can check it locally and output $0$ without any communication.  Then an $\M^\rightarrow$ protocol for computing this function is as follows: Alice keeps sending all her numbers to Bob one by one (using her local memory to keep track of which numbers she has already sent). This requires $2\log n$ size message register on every round. Bob upon receiving any number from Alice, checks if any number of his side matches the number received. If there is a match he outputs $0$, else he continues. For the last message Alice sends the number along with a special marker. Bob performs his usual check and output 1 if the check passes and the marker is present. Clearly the cost of this protocol is $2 \log n$ and thus from Theorem~\ref{lem:nmghm} the garden-hose protocol for computing this function has cost $n^2$. Since the input size is $m=n \cdot 2\log n$, the cost of the garden-hose protocol is $O(m^2/\log^2 m)$.

 \paragraph{Acknowledgements.} 
We thank Florian Speelman for the proof of the first inequality of Lemma~\ref{lem:nmghm} and letting us include it in our paper. 
 We would like to thank Mika G{\"o}{\"o}s, Robert Robere, Dave Touchette and Henry Yuen for helpful pointers. SP also thanks Anne Broadbent and Alexander Kerzner for discussions during the early phase of this project as a part of discussing garden-hose~model.  
We also thank anonymous reviewers for the helpful comments on the first draft of this work. 
 Part of this work was done when SA was visiting University of Ottawa (hosted by Anne Broadbent) and University of Toronto (hosted by Henry Yuen) and thank them for their hospitality. SA was supported in part by the Army Research Laboratory and the Army Research Office under grant number W911NF-20-1-0014.

\newcommand{\etalchar}[1]{$^{#1}$}

\end{document}